\definecolor{string}{rgb}{0.7,0.0,0.0}
\definecolor{comment}{rgb}{0.13,0.54,0.13}
\definecolor{keyword}{rgb}{0.0,0.0,1.0}
\tikzstyle{vtx}=[circle, inner sep= 0pt, minimum size= 1.2mm, fill]
\newtheorem{te}{Theorem}[section]
\newtheorem{pro}[te]{Proposition}
\newtheorem{de}{Definition}[section]
\newtheorem{lemma}[te]{Lemma}
\newtheorem{conjecture}{Conjecture}[section]
\newtheorem{re}{Remark}[section]
\newcommand{\beq}{\begin{eqnarray}}
\newcommand{\eeq}{\end{eqnarray}}
\newcommand{\beqs}{\begin{eqnarray*}}
\newcommand{\eeqs}{\end{eqnarray*}}
\newcommand{\ABC}{{\rm ABC}}
\newcommand{\ds}{\displaystyle}
\begin{document}
\title{ On structural properties of trees \\with minimal atom-bond connectivity index  IV: \\
Solving a conjecture about the pendent paths of length three}
\maketitle
\begin{center}
{\large \bf  Darko Dimitrov}
\end{center}
\baselineskip=0.20in
\begin{center}
{\it Hochschule f\"ur Technik und Wirtschaft Berlin, Germany \& \\
     Faculty of Information Studies, Novo Mesto, Slovenia} 
\\E-mail: {\tt darko.dimitrov11@gmail.com}
\end{center}

\baselineskip=0.20in
\vspace{6mm}
\begin{abstract}
The atom-bond connectivity (ABC) index is one of the 
most investigated degree-based molecular structure descriptors
with a variety of chemical applications.
It is known that among all connected graphs, the trees minimize the ABC index.
However, a full characterization of trees with a minimal ABC index is still an open problem.
By now, one of the proved properties is that a tree with a minimal ABC index may have, 
at most, one pendent path of length $3$, with the conjecture  that it cannot be a case  if the order of a tree is
larger than $1178$.
Here, we provide an affirmative answer of a strengthened version of that conjecture, 
showing that a tree with minimal ABC index cannot contain
a pendent path of length $3$ if its order is
larger than $415$.

\end{abstract}
\medskip
%
%
%
\section[Preliminaries and related results]{Preliminaries and related results}

Let $G=(V, E)$ be a simple undirected graph of order $n=|V|$ and size $m=|E|$.
For $v \in V(G)$, the degree of $v$, denoted by $d_G(v)$, is the number of edges incident
to $v$. 
When it is clear from the context we will write $d(v)$, which will always assume $d_G(v)$.
The atom-bond connectivity index of $G$ is defined as
\beq \label{eqn:001}
\ABC(G)=\sum_{uv\in E(G)}\sqrt{\frac{(d(u) +d(v)-2)}{d(u)d(v)}}= \sum_{uv\in E(G)}f(d(u),d(v)).
\eeq
The ABC index was introduced in 1998 by Estrada, Torres, Rodr{\' i}guez and Gutman~\cite{etrg-abc-98} and
is one of the most investigated  degree-based molecular structure descriptors.
More about the (degree-based) molecular structure descriptors  can be found in~\cite{fgd-ssdbti-13, g-dbti-13, tc-ndc-09}  and in the references cited therein.
In ~\cite{etrg-abc-98} it was shown that the ABC index can be a valuable predictive tool in the study of the heat of formation in alkanes. 
Additional physico-chemical applicabilities of the ABC index were presented in few other works including~\cite{e-abceba-08, dt-cbfgaiabci-10,  gtrm-abcica-12}.
These triggered a series of works that considered both mathematical and computational aspect of the ABC index~\cite{adgh-dctmabci-14, ahs-tmabci-13, ahz-ltmabci-13, cg-eabcig-11, clg-subabcig-12, dmga-cbabcig-16, d-abcig-10, d-ectmabci-2013, d-sptmabci-2014, d-sptmabci-2-2015, ddf-sptmabci-3-2016, dgf-abci-11, fgiv-cstmabci-12, fgv-abcit-09, ftvzag-siabcigo-2011, ghl-srabcig-11, gly-abctgds-12, gs-sabcitnpv-16, gfahsz-abcic-2013, gmng-abcitfnl-15, lccglc-fcstmaibtds-14, lcmzczj-twmabciatgnl-16, lmccgc-pstmabci-15, p-rubabci-14, vh-mabcict-2012, xzd-frabcit-2010}.

It is known that among all connected  graphs with $n$ vertices, the graph with minimal ABC index is a tree~\cite{dgf-abci-11,cg-eabcig-11}.
Although there is some significant progress in characterizing the trees with minimal ABC index (also refereed as  minimal-ABC trees), the complete
characterization is still open.

Before we state the main contribution of this paper,
we first present some additional notation as used in the rest of the paper.
A vertex of degree one is a {\it pendent vertex}.
A vertex is {\it big}, if its degree is at least $3$ and it is not adjacent to a vertex of degree $2$.
As in \cite{gfi-ntmabci-12}, a sequence of vertices of a graph $G$, $S_k=v_0 \, v_1 \dots v_k$,  will be  called a {\it pendent path} if 
each two consecutive vertices in $S_k$ are adjacent in $G$, $d(v_0)>2$, $d(v_i)=2$, for $i=1,\dots k-1$, and $d(v_k)=1$.
The length of the pendent path $S_k$ is $k$.
If $d(v_k) > 2$, then $S_k$ is an {\it internal path} of length $k$.
A {\it proper Kragujevac tree} \cite{hag-ktmabci-14} is a tree possessing a central vertex of degree
at least $3$, to which branches $B_k$-branches, $k \geq 1$  are
attached (see Figure for an illustration of $B_k$- and $B^*_k$-branches). 
By inserting a new vertex (of degree $2$) on a pendent edge in a $B_k$-branch, we obtain a $B^*_k$-branch.
An {\it improper Kragujevac tree} is a tree obtained from  a Kragujevac tree by replacing one $B_k$-branch with 
a $B^*_k$-branch.
\begin{figure}[!h]
\begin{center}
\includegraphics[scale=0.9]{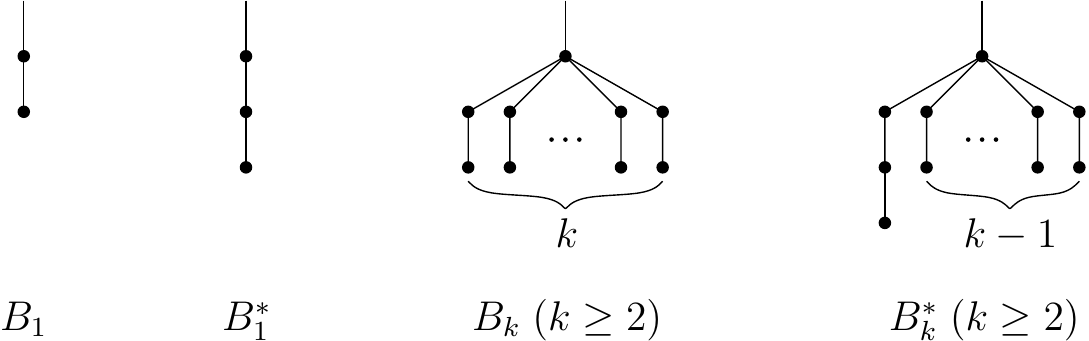}
\caption{An illustration of $B_k$- and $B^*_k$-branches.}
\label{B_k-branches}
\end{center}
\end{figure}
In~\cite{w-etwgdsri-2008} Wang defined a {\em greedy tree} as follows.

\begin{de}\label{def-GT}
Suppose the degrees of the non-leaf vertices are given, the greedy tree is achieved by the following `greedy algorithm':
\begin{enumerate}
\item Label the vertex with the largest degree as $v$ (the root).
\item Label the neighbors of $v$ as $v_1, v_2,\dots,$ assign the largest degree available to them such that $d(v_1) \geq d(v_2) \geq \dots$
\item Label the neighbors of $v_1$ (except $v$) as $v_{11}, v_{12}, \dots$ such that they take all the largest
degrees available and that $d(v_{11}) \geq d(v_{12}) \geq . . .$ then do the same for $v_2, v_3,\dots$
\item Repeat 3. for all newly labeled vertices, always starting with the neighbors of the labeled vertex with largest whose neighbors are not labeled yet.
\end{enumerate}
\end{de}
\noindent

\noindent
In the sequel few structural properties of the minimal-ABC trees relevant to the result of this work are presented. For all other known properties of the  minimal-ABC trees, we referee to \cite{adgh-dctmabci-14, d-ectmabci-2013, d-sptmabci-2014, d-sptmabci-2-2015, ddf-sptmabci-3-2016, df-ftmabc-2015, gfahsz-abcic-2013}.

\subsection[Related results]{Related results}

\noindent
The following result  characterizes the trees with minimal ABC index with prescribed degree sequences. 

\begin{te}[\cite{gfi-ntmabci-12, xz-etfdsabci-2012}]\label{thm-DS}
Given the degree sequence, the greedy tree minimizes the ABC index.
\end{te}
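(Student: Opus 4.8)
The plan is to fix the degree sequence $\pi=(d_1\ge d_2\ge\cdots\ge d_n)$, let $T$ be any tree realizing $\pi$, and transform $T$ into the greedy tree of Definition~\ref{def-GT} by a sequence of degree-preserving local moves, each of which does not increase the ABC index; since the greedy tree is (up to isomorphism) the unique fixed point on which no such move strictly helps, it must be the minimizer. The natural move on a tree is a \emph{subtree swap}: root $T$ at a vertex of maximum degree, let $u$ be a child of $p$ and $v$ a child of $q$ carrying the hanging subtrees $T_u,T_v$, and reattach $T_u$ to $q$ and $T_v$ to $p$. This keeps the graph a tree, leaves every vertex degree unchanged, and alters only the two edges incident to $p,q$, so its effect on the index is
\[
\Delta \;=\; f(d(q),d(u)) + f(d(p),d(v)) - f(d(p),d(u)) - f(d(q),d(v)).
\]

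Everything hinges on the sign of $\Delta$, which I would control by a four-point (submodularity) inequality for $f$. The claim is that $f$ is strictly submodular, $\partial^2 f/\partial x\,\partial y<0$ for $x,y\ge 1$. Writing $g=f^2=(x+y-2)/(xy)$, one has $f_{xy}=\tfrac12 g^{-3/2}\bigl(g\,g_{xy}-\tfrac12 g_x g_y\bigr)$, so the sign of $f_{xy}$ equals that of $g\,g_{xy}-\tfrac12 g_x g_y$. Substituting $g_x=(2-y)/(x^2y)$, $g_y=(2-x)/(xy^2)$, $g_{xy}=-2/(x^2y^2)$ reduces this to
\[
g\,g_{xy}-\tfrac12 g_x g_y \;=\; \frac{1}{x^3y^3}\Bigl(2-x-y-\tfrac12 xy\Bigr) \;<\;0 ,
\]
the last bracket being negative throughout $x,y\ge 1$. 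Integrating the mixed second difference then yields, for $a\ge b$ and $c\ge d$,
\[
f(a,c)+f(b,d)\;\le\; f(a,d)+f(b,c). \tag{$\star$}
\]

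Applying $(\star)$ with $a=d(p)\ge b=d(q)$ and $c=d(u)\ge d=d(v)$ gives $\Delta\ge 0$ for the \emph{sorted} placement and $\Delta\le 0$ for the \emph{crossed} placement (the larger child-root on the smaller parent). Hence any minimizer must be sorted: a higher-degree vertex never sits in a strictly lower position than a lower-degree vertex, and within a level degrees may be ordered non-increasingly. I would then argue that iterating such swaps drives $T$ to the greedy tree: first force the root to have maximum degree (swap if not), and then apply sorting swaps in breadth-first order, which is precisely the defining rule of Definition~\ref{def-GT}.

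The main obstacle is twofold. Analytically, $f$ is not monotone at small degrees — $f(\cdot,2)$ is constant and $f(x,1)$ is increasing — so $(\star)$ is only an equality when degree-$2$ or pendent vertices are involved, and establishing \emph{strictness} (needed for uniqueness) requires treating these degenerate cases separately. Combinatorially, the harder point is to organize the swaps so that they provably terminate at the greedy tree rather than cycling: one must exhibit a monotone progress measure, for instance a lexicographic level-degree potential or a majorization argument over the BFS levels, guaranteeing that every non-greedy realization of $\pi$ admits a strictly improving, progress-making swap. Confirming that subtree swaps alone — which relocate entire subtrees rather than single vertices — suffice to realize the full greedy labeling, including the sibling ordering and the filling of successive levels, is where the bookkeeping concentrates.
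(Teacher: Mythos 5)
First, a point of reference: the paper does not prove Theorem~\ref{thm-DS} at all — it is imported with citations from \cite{gfi-ntmabci-12, xz-etfdsabci-2012} — so your attempt can only be measured against those sources, which (in the tradition of Wang's greedy-tree theorem \cite{w-etwgdsri-2008}) do open exactly as you do. Your analytic core is correct: with $g=f^2$, the identity $f_{xy}=\tfrac12 g^{-3/2}\bigl(g\,g_{xy}-\tfrac12 g_xg_y\bigr)$ and your reduction of the bracket to $\frac{1}{x^3y^3}\bigl(2-x-y-\tfrac12 xy\bigr)<0$ both check out, so the four-point inequality $(\star)$ holds; in fact it is stronger than you claim, since $f_{xy}<0$ everywhere on $x,y\ge 1$ except at the single point $(1,1)$, so $(\star)$ is strict on every nondegenerate rectangle, including those with $d\in\{1,2\}$ — the degeneracies you worry about ($f(\cdot,2)\equiv 1/\sqrt{2}$, $f(\cdot,1)$ increasing) only produce equality when $a=b$ or $c=d$.

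The genuine gap is the combinatorial half, which you name but do not close, and it is not mere bookkeeping. The subtree swap is only defined for incomparable pairs $u,v$ (neither inside the other's subtree), so $(\star)$ yields only a pairwise sortedness condition between such pairs; it says nothing about ancestor--descendant pairs and does not by itself force the BFS-layered greedy structure — for instance, it cannot exclude a large-degree vertex hanging deep below a chain of smaller-degree vertices, because no admissible swap compares the relevant pairs, and strict improvement is unavailable on the plateaus created by degree-$2$ neighbours. Your framing via ``the greedy tree is the unique fixed point on which no move strictly helps'' is also false as stated: precisely because of these equality cases the minimizer need not be unique, so the correct assertion — and what the cited papers prove — is only that the greedy tree is \emph{among} the minimizers. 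The proofs in \cite{xz-etfdsabci-2012} and in Wang's framework do not run a terminating sequence of local swaps with a potential function; they take an arbitrary extremal tree rooted at a maximum-degree vertex and argue by induction over the levels, exchanging and reordering \emph{whole rooted subtrees} (compared via a majorization-type ordering of their degree data) without increasing the index until the labeling of Definition~\ref{def-GT} is realized. Without that induction, or the explicit monotone progress measure you concede you lack, your argument establishes a necessary local condition on minimizers but not the theorem.
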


\smallskip
\noindent
The following three results reveal some properties of the paths in the minimal-ABC trees.

\begin{te}[\cite{gfi-ntmabci-12}]\label{thm-GFI-10}
The $n$-vertex tree with minimal ABC index does not
contain internal paths of any length $k \geq 1$.
\end{te}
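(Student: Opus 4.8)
The plan is to argue by contradiction. Suppose a tree $T$ of minimal $\ABC$ index contains an internal path $S_k=v_0v_1\cdots v_k$ with $k\ge 1$, where $s:=d(v_0)\ge 3$, $t:=d(v_k)\ge 3$, and $d(v_i)=2$ for $1\le i\le k-1$, and exhibit a tree $T'$ on the same vertex set with $\ABC(T')<\ABC(T)$. Two facts about $f$ drive everything. First, $f(2,y)=\sqrt{y/(2y)}=1/\sqrt2$ for every $y\ge 1$, so every edge incident with a degree-$2$ vertex contributes exactly $1/\sqrt2$; in particular the whole internal path $S_k$ contributes exactly $k/\sqrt2$, independently of $s$ and $t$. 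Second, the derivative of $(x+y-2)/(xy)$ in $x$ has the sign of $2-y$, so $f(x,y)$ is strictly decreasing in $x$ when $y\ge 3$, constant when $y=2$, and strictly increasing in $x$ when $y=1$; consequently $f(s,t)<1/\sqrt2$ whenever $s,t\ge 3$, since this is equivalent to $(s-2)(t-2)>0$. An edge joining the two \emph{big} endpoints directly is therefore strictly cheaper than any path edge.

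These two facts suggest the transformation I would use: contract the internal path to the single edge $v_0v_k$ by deleting the interior vertices $v_1,\dots,v_{k-1}$ and adding the edge $v_0v_k$, then reinsert the $k-1$ freed degree-$2$ vertices elsewhere (for instance by lengthening an existing pendent path) so that the order is preserved. For the base case $k=1$, where there are no interior vertices to relocate, I would instead move all off-path branches of the endpoint of smaller degree onto the other endpoint, which turns the internal path into a pendent one. In either case the destroyed path edges release about $k/\sqrt2$, the newly created edge $v_0v_k$ costs only $f(s,t)<1/\sqrt2$, and each reinserted degree-$2$ vertex again contributes $1/\sqrt2$ per edge, so the dominant balance is $f(s,t)-1/\sqrt2<0$.

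The hard part will be controlling the boundary terms that such a transformation unavoidably produces. Making $v_0$ and $v_k$ adjacent (or relocating branches) changes the degree of an endpoint, which perturbs $f$ on every edge incident with it, and the reinsertion site likewise sees a degree change. By the monotonicity above, the sign of each perturbation depends on whether the affected neighbour is a leaf ($y=1$, where raising the parent's degree \emph{increases} the term), a degree-$2$ vertex (no change), or a big vertex ($y\ge 3$, where raising the degree helps). I would therefore choose the endpoint and the reinsertion target judiciously, split into a short case analysis according to the degree profile of the neighbours of $v_0$ and $v_k$ (the genuinely delicate cases being those where the endpoints carry many leaves, as in a double-broom, for which moving branches toward higher degree hurts), and verify the finitely many remaining small configurations by direct computation. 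Showing that in every case the aggregate gain $1/\sqrt2-f(s,t)$ is not overwhelmed by these boundary terms is the main obstacle; once the net change is negative, $\ABC(T')<\ABC(T)$ contradicts minimality.

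As a complementary check I would note that Theorem~\ref{thm-DS} gives a conceptual explanation: a minimal-$\ABC$ tree has minimum $\ABC$ index among all trees with its degree sequence, hence may be taken to be a greedy tree, and in a greedy tree the degree-$2$ vertices are pushed to the periphery (below all vertices of degree $\ge 3$), so no degree-$2$ vertex can sit strictly between two vertices of degree $\ge 3$. This rules out internal paths in that canonical minimizer and serves as a sanity check on the direction of the transformation above.
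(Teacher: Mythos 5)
You should first note that this survey does not prove Theorem~\ref{thm-GFI-10} at all; it is quoted from \cite{gfi-ntmabci-12}, so your proposal must be measured against the original argument there. Your central transformation --- contract the internal path to the single edge $v_0v_k$ and reinsert the freed degree-$2$ vertices on a pendent path --- is exactly the classical proof. But your write-up stops short of a proof because you misjudge the effect of your own move: you claim that making $v_0$ and $v_k$ adjacent ``changes the degree of an endpoint'' and that ``the reinsertion site likewise sees a degree change,'' and you then defer the resulting ``boundary terms'' to an unexecuted case analysis which you yourself call the main obstacle. In fact no vertex changes its degree at all: $v_0$ trades the edge $v_0v_1$ for the edge $v_0v_k$, so $d(v_0)$ and likewise $d(v_k)$ are unchanged, and subdividing an edge of a pendent path leaves every pre-existing degree intact. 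Hence the change of the ABC index is \emph{exactly} $f(s,t)-1/\sqrt{2}<0$ when the $k-1$ interior vertices are inserted into a pendent path of length at least $2$ (every edge with an endpoint of degree $2$, including the terminal edge with $f(2,1)=1/\sqrt{2}$, contributes $1/\sqrt{2}$), and exactly $f(s,t)-f(d(u),1)<0$ when they subdivide a pendent edge at a vertex $u$ with $d(u)\geq 3$, since $f(d(u),1)=\sqrt{(d(u)-1)/d(u)}\geq\sqrt{2/3}>1/\sqrt{2}>f(s,t)$; a pendent path always exists because a tree containing an internal path has a vertex of degree at least $3$ and hence at least three leaves. So the ``genuinely delicate cases'' you anticipate (leaf-heavy endpoints, double brooms) are phantoms, but as written your proof is incomplete precisely at the step you flag as open.

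Your treatment of $k=1$ is a genuine error rather than a gap. Under this paper's convention, where the length of a path counts edges, an ``internal path of length $1$'' is simply an edge between two vertices of degree at least $3$, and in that reading the statement cannot be proved because it is false: the minimal-ABC trees described in Theorem~\ref{no-B2star}, and more generally the conjectured minimal trees built from a central vertex with $B_3$-branches attached, are full of such edges, so your move of relocating all off-path branches of the smaller endpoint cannot always decrease the ABC index. In \cite{gfi-ntmabci-12} the length of an internal path counts its interior degree-$2$ vertices, so ``$k\geq 1$'' there means at least one degree-$2$ vertex lies strictly between the two big endpoints; there is no base case of the kind you construct, and that branch of your argument should simply be deleted (the restatement in the present paper is loose on this point). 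Finally, your greedy-tree ``sanity check'' cannot carry weight as stated: Theorem~\ref{thm-DS} gives minimality of the greedy tree for a \emph{fixed} degree sequence, so you would still have to prove that greedy trees never place a degree-$2$ vertex between two vertices of degree at least $3$ (true, but it needs a counting argument on the greedy assignment) and, absent a uniqueness statement, it would not exclude a non-greedy co-minimizer containing an internal path.
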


\begin{te}[\cite{llgw-pcgctmabci-13}] \label{thm-LG-10}
Each pendent vertex of an $n$-vertex tree with minimal
ABC index belongs to a pendent path of length $k $, $2 \leq k \leq 3$.
\end{te}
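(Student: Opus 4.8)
The plan is to argue by contradiction: assuming a minimal-ABC tree $T$ contains a pendent path of length $1$ or of length at least $4$, I would exhibit a local surgery producing a tree on the same number of vertices with strictly smaller ABC index. Everything rests on one elementary computation with the edge-weight $f$ from~\eqref{eqn:001}: since $f(d,2)=f(2,2)=f(2,1)=1/\sqrt{2}$ for every $d$, a pendent path of length $k\ge 2$ contributes exactly $k/\sqrt2$ to $\ABC(T)$ irrespective of where it is rooted, whereas a pendent path of length $1$ rooted at a vertex of degree $d$ contributes $f(d,1)=\sqrt{(d-1)/d}=:g(d)$, which is strictly larger than $1/\sqrt2$ as soon as $d\ge 3$. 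I would also record the monotonicity of $f$: for a fixed second argument $y>2$ the map $x\mapsto f(x,y)$ is strictly decreasing, for $y=2$ it is constant, and for $y=1$ it is strictly increasing. By Theorem~\ref{thm-GFI-10} every degree-$2$ vertex of $T$ lies on a pendent path, which pins down the global shape: a core of vertices of degree $\ge 3$ with pendent paths hanging off it.

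For a length-$1$ pendent path, let $u$ be a leaf adjacent to a vertex $w$ with $d(w)=d\ge 3$. The clean surgery, available whenever some pendent path of length $\ge 3$ exists, is to delete one of that path's internal degree-$2$ vertices (shortening it by one, cost $-1/\sqrt2$) and insert it on the edge $wu$; this replaces $f(d,1)=g(d)$ by $f(d,2)+f(2,1)=\sqrt2$ and leaves $d(w)$ and hence all other edges at $w$ untouched, for a net change of exactly $1/\sqrt2-g(d)<0$. If instead $w$ itself roots a pendent path $w\,a\,b\cdots$ of length $\ge 2$, I would relocate $u$ to its far end, which lowers $d(w)$ to $d-1$: the affected-path weights change by $1/\sqrt2-g(d)<0$, and the only possible compensating terms are $f(d-1,D)-f(d,D)>0$ over neighbours of degree $D\ge 3$, so the change is negative outright when $w$ has no such neighbour and must be bounded against the gain $g(d)-1/\sqrt2$ otherwise.

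For a pendent path $v_0v_1\cdots v_k$ of length $k\ge 4$, I would trim the last two vertices $v_{k-1},v_k$ and reattach them as a fresh pendent path of length $2$ at a vertex $z$ of degree $\ge 3$ possessing a neighbour of degree $\ge 3$. The remnant $v_0\cdots v_{k-2}$ still has length $\ge 2$, so the total contribution of all pendent paths is unchanged at $k/\sqrt2$; raising $d(z)$ by one keeps its edges to degree-$2$ neighbours at $1/\sqrt2$ but strictly lowers every edge $f(d(z),D)$ with $D\ge 3$ to $f(d(z)+1,D)$, whence $\ABC$ strictly decreases. The existence of such a $z$ is exactly the crux: I would secure it by a preliminary structural step, noting that a "star of pendent paths" has $\ABC=(n-1)/\sqrt2$ while any edge joining two vertices of degree $\ge 3$ has weight $f(D_1,D_2)\le f(3,3)=2/3<1/\sqrt2$, so for large $n$ such a tree cannot be extremal (equivalently, this follows from Theorem~\ref{thm-DS} and the known shape of the minimal degree sequence); hence a sufficiently large minimal-ABC tree contains two adjacent vertices of degree $\ge 3$.

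The main obstacle is the residual casework rather than any single hard estimate. The genuinely delicate points are: the length-$1$ subcase in which $w$ is adjacent to several large-degree vertices, where the positive terms $f(d-1,D)-f(d,D)$ must be controlled against the comparatively small gain $g(d)-1/\sqrt2$ (forcing either a better choice of donor path or an appeal to the fact that in a minimal tree adjacent high-degree vertices have large degree); the degenerate configurations in which neither a length-$\ge 3$ donor path nor a suitable attachment vertex $z$ is present, which must be excluded through the structural reduction above; and the routine but necessary verification that each surgery keeps the object a tree and does not re-introduce a forbidden internal or length-$1$/length-$\ge 4$ configuration. Once the monotonicity lemmas for $f$ and the "not-a-star" structural fact are in place, each case closes with a short comparison of the constant weight $1/\sqrt2$ against $g(d)$ and the high--high weights.
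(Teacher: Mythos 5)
This theorem is quoted in the paper from \cite{llgw-pcgctmabci-13} without proof, so your attempt can only be measured against the cited argument of Lin, Lin, Gao and Wu, which is itself a delicate case analysis; your bookkeeping lemma (every edge with a degree-$2$ endpoint weighs $1/\sqrt{2}$, so a pendent path of length $k\ge 2$ contributes exactly $k/\sqrt{2}$, while a length-$1$ path at a degree-$d$ vertex costs $f(d,1)=\sqrt{(d-1)/d}>1/\sqrt{2}$) and your two clean surgeries (borrowing a degree-$2$ vertex from a length-$\ge 3$ donor path to subdivide $wu$; trimming two vertices from a length-$\ge 4$ path and re-rooting them as a length-$2$ path at a vertex $z$ with a big neighbour) are correct as far as they go. But the residual casework you defer is not routine: it is exactly where the proof lives, and at least one of your fallback moves provably fails. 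In the length-$1$ case with no donor path of length $\ge 3$ available, relocating the leaf $u$ to the far end of a length-$2$ path at $w$ changes the index by $\bigl(1/\sqrt{2}-f(d,1)\bigr)+\sum_{D\ge 3}\bigl(f(d-1,D)-f(d,D)\bigr)$; take $d=d(w)=3$ with a single big neighbour of degree $D$, so the sum is $1/\sqrt{2}-f(3,D)$, and the total is $2/\sqrt{2}-\sqrt{2/3}-f(3,D)$, which is \emph{positive} for all $D\ge 14$ (at $D=14$ it is $\approx +0.0001$, and $f(3,D)$ decreases to $1/\sqrt{3}$). So the configuration ``degree-$3$ vertex with a leaf child, a length-$2$ child path, and a high-degree parent, in a tree all of whose pendent paths have length $\le 2$'' is killed by neither of your surgeries, and excluding it needs genuinely new input (degree-sequence arguments via Theorem~\ref{thm-DS}, or bounds on $B_1$- and $B_2$-branches from the literature), not the ``short comparison of constants'' you promise.

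Two further holes of the same kind: (i) if $w$ roots no pendent path of length $\ge 2$ at all (all its children are leaves or big vertices) and no length-$\ge 3$ donor exists, neither surgery applies, and the natural swap with a length-$2$ path rooted at some $r$ nets $f(d(r),1)-f(d(w),1)$, which is nonnegative whenever $d(r)\ge d(w)$; (ii) in the length-$\ge 4$ case, if the attachment vertex $z$ also has a leaf child, raising $d(z)$ costs $f(d(z)+1,1)-f(d(z),1)>0$, and for $d(z)=3$ with big neighbour of degree $3$ this term ($\approx +0.0495$) dominates the gain $f(4,3)-f(3,3)\approx -0.0212$; so you must either first eliminate length-$1$ paths (circular, given the gap above) or prove that a suitable $z$ without leaf children exists. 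Your global frame --- contradiction by local surgery, the $1/\sqrt{2}$ calculus, and the non-spider reduction via Theorem~\ref{thm-GFI-10} --- is in the spirit of \cite{llgw-pcgctmabci-13}, but as it stands the proposal is a strategy sketch with open cases, one of which your stated tactic demonstrably cannot close.
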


\begin{te}[\cite{gfi-ntmabci-12}] \label{thm-GFI-30}
The $n$-vertex tree with minimal ABC index contains at
most one pendent path of length $3$.
\end{te}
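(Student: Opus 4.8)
The plan is to argue by contradiction, exploiting a single exact identity for the weight function together with Theorem~\ref{thm-LG-10}. First I would record the one computation that drives everything: for every $d \ge 1$ one has $f(2,d) = \sqrt{d/(2d)} = 1/\sqrt 2$. Consequently every edge of a pendent path of length $2$ or $3$ contributes exactly $1/\sqrt 2$ to the $\ABC$ index, since the start edge is $f(d(v_0),2)$, the inner edges are $f(2,2)$, and the terminal edge is $f(2,1)$, all equal to $1/\sqrt 2$. Thus any rearrangement confined to the degree-$\le 2$ part of such paths is $\ABC$-neutral, and this neutrality is the lever of the whole argument.

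Then suppose, for contradiction, that an $n$-vertex minimal-$\ABC$ tree $T$ contains two distinct pendent paths of length $3$, say $P_1 = p_1 u_1 u_2 u_3$ and $P_2 = p_2 w_1 w_2 w_3$, with $d(p_1), d(p_2) > 2$ and $u_3, w_3$ the two pendent vertices (the branch vertices $p_1, p_2$ may or may not coincide). I would form a new tree $T'$ by \emph{relocating the leaf} $u_3$: delete the edge $u_2 u_3$ and insert the edge $w_3 u_3$. Since $u_3$ is a leaf, $T'$ is again a tree on the same vertex set; now $P_1$ has been shortened to the length-$2$ path $p_1 u_1 u_2$, while $P_2$ has been lengthened to the length-$4$ path $p_2 w_1 w_2 w_3 u_3$.

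Next I would verify that $\ABC(T') = \ABC(T)$ by the bookkeeping above. Only four edges are affected: $u_2 u_3$ is removed (weight $1/\sqrt 2$) and $w_3 u_3$ is added (weight $f(2,1) = 1/\sqrt 2$); the edge $u_1 u_2$ changes from $f(2,2)$ to $f(2,1)$ because $u_2$ drops to degree $1$, and $w_2 w_3$ changes from $f(2,1)$ to $f(2,2)$ because $w_3$ rises to degree $2$ — and all four of these weights equal $1/\sqrt 2$. The degrees of $p_1$ and $p_2$ are untouched, so no other edge weight moves, and the total index is unchanged. Hence, since $T$ realizes the minimum over all $n$-vertex trees and $T'$ is an $n$-vertex tree with $\ABC(T') = \ABC(T)$, the tree $T'$ is itself a minimal-$\ABC$ tree. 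But $T'$ contains the pendent path $p_2 w_1 w_2 w_3 u_3$ of length $4$, so its pendent vertex $u_3$ lies on no pendent path of length $2$ or $3$, contradicting Theorem~\ref{thm-LG-10}. Therefore $T$ can contain at most one pendent path of length $3$.

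The \textbf{main obstacle} is purely conceptual: one must notice that the productive move is \emph{not} to shorten both offending paths (which merely trades one forbidden configuration for another and can leave the index comparison ambiguous), but to shorten one path and \emph{lengthen} the other. Because $f(2,\cdot)$ is identically $1/\sqrt 2$, this keeps the index exactly constant while manufacturing a length-$4$ pendent path that Theorem~\ref{thm-LG-10} has already outlawed. The remaining checks are routine: confirming the edge accounting above, and confirming the transformation is well defined even when $p_1 = p_2$, which it is, since $u_3 \ne w_3$ and the relocation only touches the degree-$\le 2$ tails of the two paths.
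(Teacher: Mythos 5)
Your proof is correct, and there is nothing internal to compare it against: this paper does not prove Theorem~\ref{thm-GFI-30}, which is imported verbatim from \cite{gfi-ntmabci-12}. Your argument is essentially the classical one from that source: the identity $f(2,d)=\sqrt{(2+d-2)/(2d)}=1/\sqrt{2}$ for all $d\ge 1$ makes the leaf relocation ABC-neutral (all four affected edge weights equal $1/\sqrt{2}$, and the degrees of the branching vertices $p_1,p_2$ are untouched), so the modified tree $T'$ is again minimal while containing a pendent path of length $4$, which is forbidden. Your bookkeeping is accurate, minimality does transfer to $T'$ since $\ABC(T')=\ABC(T)$, and the well-definedness remarks (two distinct pendent paths of length $3$ can share at most the branching vertex, so the transformation only touches their degree-$\le 2$ tails) are sound. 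The one point deserving care is logical, not computational: you close the argument by invoking Theorem~\ref{thm-LG-10} of \cite{llgw-pcgctmabci-13}, which chronologically postdates \cite{gfi-ntmabci-12}; this is circular if the proof of Theorem~\ref{thm-LG-10} were to use the at-most-one-$P_3$ statement. It is safe here because the only half of Theorem~\ref{thm-LG-10} you actually need — that a minimal-ABC tree has no pendent path of length $k\ge 4$ — is established independently of Theorem~\ref{thm-GFI-30}, and indeed the original proof combines exactly your neutral exchange with that exclusion. Within the present paper, where Theorem~\ref{thm-LG-10} is stated as a black-box result, your derivation is valid as written.
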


\noindent
In the context of pendent paths, we assume the following level representation of a greedy tree:
The root of a greedy tree  belongs to the highest level $i$ of the tree.
If so, the leaves of a greedy tree can belong to level $1$ or $2$, except 
the pendant vertex of a pendent path of length $3$ that may belong to levels $0$ or $1$ 
(see the graph $G$ in Figure~\ref{fig-P3-20} for an illustration).

\smallskip
\noindent
The next three results present some conditions on the occurrence of a pendent path of length $3$  in the minimal-ABC trees.
Recall that a $B_k^*$-branch is obtained from a $B_k$-branch by replacing one pendent path of length $2$ with a pendent path of length $3$.

\begin{lemma}[\cite{df-ftmabc-2015}]  \label{le-noBk-star}
A minimal-ABC tree does not contain a $B_k^*$-branch, $k \geq 4$.
\end{lemma}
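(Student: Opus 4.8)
The plan is to argue by contradiction. Suppose a minimal-ABC tree $T$ on $n$ vertices contains a $B_k^*$-branch with $k\ge 4$, rooted at a vertex $w$ with $d(w)=k+1$, whose $k$ children consist of $k-1$ pendent paths of length $2$ and one pendent path of length $3$, say $w\,a\,b\,c$ with $d(a)=d(b)=2$ and $d(c)=1$; let $u$ be the neighbor of $w$ outside the branch and $D=d(u)$. I would first record the single computational fact that drives everything: since $(2+x-2)/(2x)=1/2$ for every $x$, one has $f(2,x)=1/\sqrt2$ for all $x\ge 1$. Consequently every edge of $T$ incident to a vertex of degree $2$ contributes exactly $1/\sqrt2$ to $\ABC(T)$, and only the edges joining two vertices of degree $\ge 3$ have a contribution that depends on the degrees. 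In particular, merely relocating the length-$3$ path among branch vertices is $\ABC$-neutral, so any improvement must come from creating a new edge between two vertices of degree $\ge 3$.

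The transformation I propose is local: detach one length-$2$ pendent path $w\,m\,\ell$ from $w$ and reattach it at $a$, i.e. delete the edge $wm$ and insert the edge $am$. In the resulting tree $T'$ the vertex $a$ becomes a branch vertex with $d(a)=3$ carrying the two length-$2$ paths $a\,b\,c$ and $a\,m\,\ell$, while $d(w)$ drops from $k+1$ to $k\ge 3$; the order is unchanged and $T'$ is again a tree. Using the identity above, every edge incident to a degree-$2$ vertex still contributes $1/\sqrt2$, so the only contributions that change are those of $uw$ and $wa$, giving
\[
\ABC(T')-\ABC(T)=\bigl(f(D,k)-f(D,k+1)\bigr)+\Bigl(f(k,3)-\tfrac{1}{\sqrt2}\Bigr).
\]
The second summand is negative because $f(k,3)<f(2,3)=1/\sqrt2$, reflecting the newly created heavy edge $wa$; the first summand is the positive price paid for lowering $d(w)$, and it is the quantity one must control.

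It then remains to prove $f(D,k)-f(D,k+1)+f(k,3)-1/\sqrt2<0$ for all integers $k\ge4$ and $D\ge3$, which contradicts the minimality of $T$. I would note that $T$ has no internal paths (Theorem~\ref{thm-GFI-10}), so the neighbor $u$ of the branch vertex $w$ cannot have degree $2$; hence $D\ge3$ and this is the only range to check. The hard part will be the uniformity in $D$: I expect to show that $f(D,k)-f(D,k+1)$ is increasing in $D$, so that its supremum is the limit $1/\sqrt k-1/\sqrt{k+1}$ as $D\to\infty$, reducing the claim to the single-variable inequality $1/\sqrt k-1/\sqrt{k+1}+f(k,3)-1/\sqrt2<0$ for $k\ge4$; a short monotonicity-in-$k$ argument then settles it from the boundary value at $k=4$. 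This also pinpoints why the statement is sharp at $k\ge4$: for $k=3$ the same difference becomes positive once $D$ is large enough, so the analogous move fails and a $B_3^*$-branch cannot be excluded this way, consistent with Theorem~\ref{thm-GFI-30} permitting a single length-$3$ pendent path.
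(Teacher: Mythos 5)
The paper does not prove this lemma at all: it is imported verbatim from \cite{df-ftmabc-2015}, so there is no internal proof to compare you against, and your argument must be judged on its own merits. On those merits it is essentially sound, and pleasantly self-contained. Your bookkeeping is correct: since $f(2,x)=\sqrt{x/(2x)}=1/\sqrt{2}$ for every $x\ge 1$, the only edges whose contribution changes when the path $w\,m\,\ell$ is re-rooted at $a$ are $uw$ and $wa$ (the edges $ab$, $bc$, $m\ell$, the moved edge $wm\mapsto am$, and the remaining $k-2$ path-edges at $w$ all keep the value $1/\sqrt2$), giving exactly $\ABC(T')-\ABC(T)=f(D,k)-f(D,k+1)+f(k,3)-1/\sqrt2$. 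The monotonicity step you defer (``I expect to show'') is not a gap in substance: it is precisely Proposition~\ref{appendix-pro-030} of this paper with $\Delta x=0$, $\Delta y=1$, which yields that $-f(D,k+1)+f(D,k)$ increases in $D$ and is therefore bounded above by its limit $1/\sqrt{k}-1/\sqrt{k+1}$. Both this difference and $f(k,3)=\sqrt{(k+1)/(3k)}$ strictly decrease in $k$, so the reduced one-variable inequality follows from its boundary value $1/2-1/\sqrt{5}+\sqrt{5/12}-1/\sqrt2\approx-0.0088<0$ at $k=4$, as you predicted; and your check that the corresponding quantity at $k=3$ is $\approx+0.037$ correctly explains why the threshold $k\ge4$ is sharp for this particular move, consistent with Theorem~\ref{thm-GFI-30}.

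The one imprecise step is the justification of $D\ge3$. Theorem~\ref{thm-GFI-10} rules out $d(u)=2$ only when the degree-$2$ path through $u$ terminates at a vertex of degree greater than $2$; a priori $u$ could have degree $2$ while lying on a pendent path emanating from $w$, i.e.\ when $T$ degenerates to a spider centered at $w$. But this is harmless, and in fact your restriction is unnecessary: for $D=2$ the first summand vanishes, since $f(2,k)=f(2,k+1)=1/\sqrt2$, and the change is $f(k,3)-1/\sqrt2<0$ outright (this also covers the degenerate case in which $w$ has no big neighbor at all); and $D=1$ is excluded by Theorem~\ref{thm-LG-10}, and would anyway make the first summand negative because $\sqrt{(k-1)/k}<\sqrt{k/(k+1)}$. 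So your transformation strictly decreases the ABC index for every admissible degree of $u$, contradicting minimality, and the proof closes once the deferred monotonicity computation is written out --- which the appendix of this very paper already supplies.
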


\begin{te}[\cite{ddf-sptmabci-3-2016}]  \label{no-B1star}
A minimal-ABC tree of order $n >18$ with a pendent path of length $3$ does not contain $B_1$-branch ($B_1^*$-branch).
\end{te}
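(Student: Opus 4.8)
The plan is to argue by contradiction: assume $T$ is a minimal-ABC tree of order $n>18$ that possesses a pendent path of length $3$ and contains a $B_1$-branch (or a $B_1^*$-branch), and then exhibit a tree $T'$ of the same order with $\ABC(T')<\ABC(T)$. The $B_1^*$-branch is disposed of at once: since it attaches its pendent path of length $3$ through a degree-$2$ root, it produces a pendent path of length $4$ emanating from the vertex above it, which is excluded by Theorem~\ref{thm-LG-10}. Hence only the $B_1$-branch must be ruled out. By Theorem~\ref{thm-GFI-30} this $B_1$-branch is the unique pendent path of length $3$ of $T$; write it as $c\,r\,a\,b$ with $d(c)\ge 3$, $d(r)=d(a)=2$, $d(b)=1$, so $r$ is the degree-$2$ root adjacent to the high-degree vertex $c$.

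First I would record the two arithmetic facts that drive every estimate: $f(x,2)=1/\sqrt 2$ for every $x$, so all edges incident with a degree-$2$ vertex contribute the same constant; and, for $x\ge 3$, the map $y\mapsto f(x,y)$ is strictly decreasing and strictly convex (from $\partial_y f^2=-(x-2)/(xy^2)$ and a direct check that the second derivative is positive). The transformation then detaches one pendent path of length $2$ from some other branch of $T$ whose root $s$ is a neighbour of $c$ and reattaches it to $r$. This raises $d(r)$ from $2$ to $3$, so the former pendent path of length $3$ splits into two pendent paths of length $2$ hanging at $r$ and the length-$3$ path disappears, while $d(s)$ drops by one and, crucially, $d(c)$ is unchanged. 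Every edge that changed is incident with a degree-$2$ vertex except the two edges $cr$ and $cs$, so
\[
\ABC(T')-\ABC(T)=\bigl(f(d(c),3)-f(d(c),2)\bigr)-\bigl(f(d(c),d(s))-f(d(c),d(s)-1)\bigr).
\]
By convexity of $y\mapsto f(d(c),y)$ the unit increments are non-decreasing in $y$, so the right-hand side is $\le 0$, and it is \emph{strictly} negative whenever $d(s)-1>2$, i.e.\ whenever the donor branch is a $B_k$-branch with $k\ge 3$. In that case the minimality of $T$ is contradicted, and by Theorem~\ref{thm-DS} one may work throughout with the greedy realisation, which guarantees the required donor structure is present.

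The one configuration this local move does not immediately settle is the boundary case in which every branch other than the $B_1$-branch is a $B_2$-branch, for then the only available donor has $d(s)=3$ and the displayed difference vanishes. I expect this to be the main obstacle. Here the degree sequence is rigid — the order forces $n=5\,d(c)-1$ — and no single edge switch reaches a better tree, so the comparison must instead be made against a greedy tree of the same order but of a different central degree, one containing no pendent path of length $3$; bounding the resulting ABC difference is the delicate step and is precisely where the hypothesis $n>18$ is consumed, the finitely many residual orders $n\le 18$ being verified by direct computation. Because this last comparison relates two trees that are not linked by a single local modification, it requires combining the monotonicity and convexity of $f$ with explicit order-dependent estimates, and that is the part I anticipate being genuinely technical rather than routine.
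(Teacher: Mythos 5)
First, a point of reference: the paper you were given does not prove this statement at all --- it is imported verbatim from \cite{ddf-sptmabci-3-2016}, where the proof is a lengthy case analysis built on explicit transformations with quantitative ABC estimates. So your attempt must be judged on its own completeness. Much of your groundwork is correct: a $B_1^*$-branch does create a pendent path of length $4$ and is killed by Theorem~\ref{thm-LG-10}; $f(x,2)=1/\sqrt{2}$ for all $x$, so the degree-$2$ incident edges really do drop out of the accounting; $y\mapsto f(x,y)$ is strictly decreasing and strictly convex for $x\ge 3$ (with $g=f^2$ one has $2gg''-(g')^2=(x-2)(3x+4y-6)/(x^2y^4)>0$, so convexity survives the square root); and your displayed difference $\bigl(f(d(c),3)-f(d(c),2)\bigr)-\bigl(f(d(c),d(s))-f(d(c),d(s)-1)\bigr)$ is the exact change for your relocation, strictly negative precisely when $d(s)\ge 4$.

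There are, however, two genuine gaps. First, the existence of the donor: Theorem~\ref{thm-DS} only says the minimal tree is the greedy tree \emph{for its own degree sequence}; it does not guarantee that $c$ has a neighbour $s$ that is the root of a $B_k$-branch with $k\ge 3$. The other neighbours of $c$ may be midpoints of pendent paths of length $2$, or big vertices whose children are themselves big (when $c$ sits higher than level $3$), and in the latter situation detaching a path from deeper in the tree changes the parent edge of the true donor, so your two-term formula no longer applies; this case split is simply absent. Second, and decisively: in the boundary case $d(s)=3$ your move does not merely give $\Delta=0$ by coincidence --- the resulting tree is \emph{isomorphic} to $T$, since stripping a path from a $B_2$-root turns $s$ into the new $B_1$-root while $r$ becomes a $B_2$-root, i.e.\ the move only permutes which child of $c$ carries the length-$3$ path. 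Hence no refinement of this local transformation can settle that case; a global comparison against a tree with a \emph{different} degree sequence is unavoidable, and that is exactly where all the content of the theorem and the hypothesis $n>18$ reside. Your proposal only announces this step (and the rigidity claim $n=5d(c)-1$ silently assumes $c$ is the central vertex and that the residual configuration is the star-like one, neither of which you establish). As it stands you have a correct reduction plus a plan for the essential case, not a proof; the quantitative comparison you defer is precisely what \cite{ddf-sptmabci-3-2016} actually carries out.
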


\begin{te} [\cite{ddf-sptmabci-3-2016}]  \label{no-B2star}
A minimal-ABC tree  of order $n > 18$ with a pendent path of length $3$ may contain a $B_2$-branch if and only if it is of order $161$ or $168$.
Moreover, in this case a minimal-ABC tree is comprised of single central vertex, $B_3$-branches and one $B_2$, including a
pendent path of length $3$ that may belong to a $B_3^*$-branch or $B_2^*$-branch.
\end{te}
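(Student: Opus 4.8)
The plan is to use Theorem~\ref{thm-DS}, which makes the tree in question a greedy tree, so it consists of a single central vertex of some degree $D$ carrying pendent branches; together with Theorem~\ref{thm-GFI-10} and Theorem~\ref{thm-GFI-30} this lets me treat it as a central vertex bearing branches, exactly one of which is starred (carries the unique pendent path of length $3$). I would first record the elementary but crucial identity $f(k,2)=\sqrt{(k+2-2)/(2k)}=1/\sqrt2$, valid for every $k\ge2$, which makes every ``root--to--degree-$2$'' edge contribute the same amount; consequently the total contribution of all pendent-path edges depends only on the number of pendent paths, and inserting the extra degree-$2$ vertex that turns a $B_k$-branch into a $B_k^*$-branch merely adds the constant $f(2,2)$, irrespective of which branch is starred. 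Combined with the convexity of $g(x):=f(D,x)$ in its second argument (a direct computation gives $g''>0$), this reduces every local comparison to a comparison of the central edges $f(D,\cdot)$.

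Next I would eliminate large branches. Assuming a $B_2$-branch is present, I replace a coexisting $B_k$-branch with $k\ge4$ together with the $B_2$-branch by a $B_3$-branch and a $B_{k-1}$-branch; this preserves both the order and the central degree $D$, the pendent-path contributions cancel by the identity above, and the net change in the \ABC\ index equals $[g(4)+g(k)]-[g(3)+g(k+1)]$, which is strictly negative by convexity since the pair $\{3,k+1\}$ majorizes $\{4,k\}$. Hence no $B_k$-branch with $k\ge4$ can coexist with a $B_2$-branch; together with Theorem~\ref{no-B1star} and Lemma~\ref{le-noBk-star}, the tree is built solely from $B_2$- and $B_3$-branches, the starred one being a $B_2^*$- or a $B_3^*$-branch.

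I would then parametrize such a tree by the number $p$ of $B_3$-type and $q\ge1$ of $B_2$-type branches, so that $D=p+q$ and $n=7p+5q+2$, and write the index in closed form as
\beq
\ABC &=& p\,f(D,4)+q\,f(D,3)+\frac{3N-q}{7}\left(\frac{1}{\sqrt2}+f(2,1)\right)+f(2,2), \nonumber
\eeq
where $N=n-2$. Because this value is the same whether the length-$3$ path sits on a $B_2^*$- or a $B_3^*$-branch, both placements are admissible, which yields the ``moreover'' part of the statement. I would show that, for fixed $n$, the expression is strictly increasing in $q$ along the feasible arithmetic progression, so an optimal tree carries as few $B_2$-branches as possible, forcing $q=1$; since $q=1$ is equivalent to $n=7D$, this gives a single $B_2$-branch and the orders $n=7D$.

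Finally I would decide for which $D$ this one-$B_2$ tree beats every competing pendent-path-of-length-$3$ tree of the same order. The principal rival is the $B_2$-free tree obtained by trading the $B_2$-branch for six $B_4$-branches while lowering the central degree to $D-2$, and minimality reduces to a single inequality $\Phi(D)<0$, where $\Phi$ is an explicit combination of the central-edge terms $f(D,3),f(D,4),f(D,5)$ and their shifts. The main obstacle is the sharp analysis of $\Phi$: one must estimate the slowly varying differences of $f(D,\cdot)$ tightly enough to prove that $\Phi(D)<0$ holds for $D=23,24$ but fails for $D=22$ and for all $D\ge25$, which is exactly where the threshold $n\in\{161,168\}$ is produced. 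I expect this step to require careful mean-value estimates of the square-root terms together with a direct check of the few boundary values of $D$, after which the stated equivalence, including the two admissible positions of the length-$3$ path, follows.
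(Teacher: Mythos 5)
A preliminary remark: the paper you were given contains no proof of this statement at all — it is quoted from \cite{ddf-sptmabci-3-2016} as a known result — so your attempt can only be judged against what any correct proof must accomplish; judged that way, it has two genuine gaps.

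The first gap is your opening reduction. Theorem~\ref{thm-DS} says that \emph{for a fixed degree sequence} the greedy tree is optimal; it does not say that a greedy tree consists of a single central vertex carrying $B_k$-branches. Greedy trees can have arbitrarily many levels, and in fact the bulk of the present paper (Cases 1 and 2 of Theorem~\ref{te-P3-20}) is devoted precisely to excluding roots at level $5$ and level $\geq 6$ — that exclusion is hard-won, not free, and neither Theorem~\ref{thm-GFI-10} nor Theorem~\ref{thm-GFI-30} supplies it. The Kragujevac structure (``comprised of single central vertex, $B_3$-branches and one $B_2$'') is part of the \emph{conclusion} of the cited theorem, and you assume it at the outset. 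Every subsequent step — the majorization inequality $g(4)+g(k)<g(3)+g(k+1)$ with one common central degree $D$, the closed formula for $\ABC$ in $p,q$, the monotonicity in $q$ — presupposes that all branch roots share a single parent of degree $D$, so the argument collapses without that reduction. (Granting the structure, your local steps are sound: $f(k,2)=f(2,1)=f(2,2)=1/\sqrt{2}$ is correct, $x\mapsto f(D,x)$ is indeed strictly convex, and the observation that the starred vertex contributes the constant $f(2,2)$ wherever it sits does yield the ``moreover'' clause about $B_2^*$ versus $B_3^*$.)

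The second gap is the endgame. Monotonicity in $q$ only drives $q$ down to the smallest value in the feasible residue class of $n-2$ modulo $7$; it forces $q=1$ only when $7\mid n$. For the other residues the minimal feasible $q$ is at least $2$, and to exclude those trees you must beat them by $B_2$-free competitors, which your sketch never does. More seriously, the ``if'' direction cannot be obtained by comparing the one-$B_2$ tree against a single hand-picked rival (your six-$B_4$ family): proving $\Phi(23),\Phi(24)<0$ shows only that the one-$B_2$ tree beats that rival, not that it is the minimal-ABC tree of orders $161$ and $168$ — for that, one must dominate \emph{every} tree of those orders, including trees with no pendent path of length $3$ and deeper greedy trees. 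In \cite{ddf-sptmabci-3-2016} this point rests on exhaustive computational determination of the minimal-ABC trees at the relevant orders, and an analytic comparison against one family cannot replace it; conversely, for the ``only if'' direction you would still need strict positivity of $\Phi(D)$ for all $D\notin\{23,24\}$ together with the missing $q\geq 2$ cases. So the thresholds $161$ and $168$ do not yet follow from what you have written.
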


\noindent
In \cite{d-ectmabci-2013} the following important conjecture was raised.

\begin{conjecture} \label{conj-P3}
A minimal ABC tree of order $n >1178$ does not contain a pendent path of length three.
\end{conjecture}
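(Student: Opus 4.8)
The plan is to argue by contradiction. Suppose $T$ is a minimal-ABC tree of order $n$ that contains a pendent path of length $3$; I will produce a tree of the same order with strictly smaller index. The first phase pins $T$ down using the results above. By Theorem~\ref{thm-GFI-10} there are no internal paths, so every vertex of degree $1$ or $2$ lies on a pendent path, and by Theorem~\ref{thm-LG-10} each such path has length $2$ or $3$; by Theorem~\ref{thm-GFI-30} exactly one has length $3$. For $n$ large enough that Theorem~\ref{no-B2star} applies (i.e.\ $n>18$ and $n\notin\{161,168\}$), Theorem~\ref{no-B1star}, Theorem~\ref{no-B2star} and Lemma~\ref{le-noBk-star} together exclude $B_1$-, $B_1^*$-, $B_2$-, $B_2^*$- and all $B_k^*$-branches with $k\ge4$; hence every branch is a $B_k$-branch with $k\ge3$ and the unique length-$3$ path occupies a single $B_3^*$-branch (the remaining possibility, a $P_3$ attached directly to the centre, is treated analogously). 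Using the established single-core structure of minimal-ABC trees, $T$ then collapses to a central vertex $C$ of degree $D$ carrying one $B_3^*$-branch and $D-1$ further $B_k$-branches, $k\ge3$.

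The second phase exploits the elementary identity $f(2,d)=\sqrt{\tfrac{d}{2d}}=\tfrac{1}{\sqrt2}$, valid for every $d$: every edge meeting a degree-$2$ vertex contributes exactly $1/\sqrt2$, so the only edges of other weight are the $D$ edges from $C$ to the branch vertices. Writing $\delta_i$ for the degree of the $i$-th branch vertex, this yields
\[
\ABC(T)=\frac{n-1-D}{\sqrt2}+\sum_{i=1}^{D}f(D,\delta_i),
\]
subject to the vertex-count identity $\sum_i\delta_i=(n-2+D)/2$ in the presence of the $P_3$ (and $(n-1+D)/2$ in its absence). Since $f(D,\delta)$ is strictly decreasing and convex in $\delta$, for fixed $D$ the core sum $\sum_i f(D,\delta_i)$ is minimized by making the $\delta_i$ as equal as possible; optimizing over $D$ then forces the common branch degree into $\{4,5\}$, so that only $B_3$- and $B_4$-branches survive, and $\ABC$ becomes an explicit function of the single integer $D$ ranging over roughly $(n-1)/9\le D\le(n-1)/7$.

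The contradiction comes from comparing the minimum of this function over the $P_3$-free family against its minimum over the $P_3$-containing family at the same order $n$. A genuine parity obstruction surfaces here: the count identity forces $D\equiv n\pmod 2$ when a $B_3^*$-branch is present but $D\equiv n+1\pmod2$ when it is absent, so the two optima sit at centre-degrees of opposite parity and cannot be bridged by one local branch swap. I would therefore compare the two one-variable minima head-on. Morally, the length-$3$ path spends one extra degree-$2$ vertex to add $1/\sqrt2$ to the first term while contributing the same $f(D,4)$ to the core sum as a plain $B_3$-branch would, so it buys nothing; redistributing that surplus vertex in the $P_3$-free competitor lowers the index, and for $n$ past an explicit threshold this gain strictly exceeds the small correction coming from the shifted optimal centre degree, giving $\ABC(T)>\ABC(T')$.

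The principal obstacle is exactly this last inequality. It is a sharp comparison between two separately optimized, parity-dependent functions of $n$, and pushing the threshold down to the value announced in the abstract (rather than the weaker bound of Conjecture~\ref{conj-P3}) requires accurate estimates of the optimal centre degree together with the increments $f(D,4)-f(D,5)$ and $f(D,\delta)-f(D\pm1,\delta)$ as $D$ varies with $n$; crude bounds would only recover a much larger threshold. Finally, the finitely many small orders below the threshold, and the exceptional orders $161$ and $168$ singled out by Theorem~\ref{no-B2star}, would be disposed of by direct computation.
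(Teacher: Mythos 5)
There is a genuine gap, and it sits exactly where you write ``Using the established single-core structure of minimal-ABC trees, $T$ then collapses to a central vertex $C$\dots''. No such structure theorem exists, and it is precisely the hard part of the problem. In the paper's level representation your collapsed configuration corresponds to the root being at level $4$, i.e.\ to $G$ being an (improper) Kragujevac tree; that is only Case~$3$ of the paper's proof of Theorem~\ref{te-P3-20}, and it is the easiest case, dispatched by the single centre-splitting transformation $\mathcal{T}_6$ with negativity of $g_6(d(z))$ for $d(z)>59$ (whence $n>415$). The bulk of the paper is Cases~$1$ and~$2$, where the root sits at level $\geq 5$, so its children are themselves big vertices each carrying a fan of $B_3$-branches; these cases need Propositions~\ref{pro-P3-10} and~\ref{pro-P3-20} to constrain sibling degrees, the transformations $\mathcal{T}_1$--$\mathcal{T}_5$, and the lower-envelope and table-based sign verifications (Tables~\ref{T1-parameters} and~\ref{T2-parameters}). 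Moreover, your assumption is not merely unjustified but false for large $n$: the trees actually competing for the minimum at large order have a two-level core (root at level $5$), so the minimizer of the $P_3$-containing family need not lie in your one-parameter single-centre family, and comparing your two one-variable optima head-on cannot produce the required contradiction --- the tree you would beat is not the tree you assumed.

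A secondary point: even granting the Kragujevac reduction, your step ``convexity forces the common branch degree into $\{4,5\}$'' is not available as stated; what the paper actually uses is that Lemma~\ref{le-noBk-star} together with Theorems~\ref{thm-GFI-30}, \ref{no-B1star} and \ref{no-B2star} leaves only $B_3$-branches plus at most one $B_3^*$-branch (for $n>18$, $n\notin\{161,168\}$), so no equalization argument over $B_k$, $k\geq 3$, is needed or proved. That said, your bookkeeping inside Case~$3$ is sound and even attractive: the identity $f(2,d)=1/\sqrt{2}$, the resulting formula $\ABC(T)=(n-1-D)/\sqrt{2}+\sum_i f(D,\delta_i)$, and the parity constraint $\sum_i \delta_i=(n-2+D)/2$ versus $(n-1+D)/2$ are all correct and would give a clean global alternative to $\mathcal{T}_6$ for that single case. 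But as a proof of the conjecture the proposal fails, because it assumes away the configurations (levels $\geq 5$) whose elimination constitutes almost the entire argument of the paper.
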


\noindent
In the next section, in Theorem~\ref{te-P3-20}, we prove a stronger version of the above conjecture  by showing that 
a minimal-ABC tree of order $n >415$ does not contain a pendent path of length three.
In the appendix we present some auxiliary results that will be used to prove the conjecture.

\section[The proof of Conjecture~\ref{conj-P3}]{The proof of Conjecture~\ref{conj-P3}}\label{sec:Results}

\noindent
The following proposition will be used in the proof of the main result of this paper, Theorem~\ref{te-P3-20}.

\begin{pro} \label{pro-P3-10}
Let $x$ und $y$ be vertices of a minimal-ABC tree $G$ that have a common parent vertex $z$,  such that $d(x) \geq d(y) \geq 5$.
If $x$ and $y$ have only $B_3$-branches as children, then either $d(y) = d(x)$ or  $d(y) = d(x)-1$.
\end{pro}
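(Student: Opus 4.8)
The plan is to argue by contradiction using a degree-balancing exchange. Suppose $d(x)=a$ and $d(y)=b$ with $a\ge b+2$ and $b\ge 5$. Since $G$ is a minimal-ABC tree, by Theorem~\ref{thm-DS} it is the greedy tree on its own degree sequence, so degrees do not increase from the root toward the leaves; in particular $d(z)\ge d(x)=a$. Because every child of $x$ and of $y$ is the root (a vertex of degree $4$) of a $B_3$-branch, $x$ is adjacent only to $z$ and to its $a-1$ branch-roots, and $y$ is adjacent only to $z$ and to its $b-1$ branch-roots. I would then apply the transformation $G\to G'$ that detaches one $B_3$-branch from $x$ and reattaches its root to $y$; this yields a tree with $d_{G'}(x)=a-1$, $d_{G'}(y)=b+1$, all other degrees unchanged, and the internal structure of every branch preserved.

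First I would isolate the edges whose contribution to the ABC index changes: the edges $zx$ and $zy$, the edges joining $x$ and $y$ to their branch-roots, and the edge joining the moved branch-root to its new neighbor. Collecting them, the whole change takes the compact form
\[
\ABC(G')-\ABC(G)=C(a-1)+C(b+1)-C(a)-C(b),\qquad C(t):=f(d(z),t)+(t-1)\,f(t,4),
\]
where $C(t)$ is the total contribution of a degree-$t$ vertex together with its edge to $z$ and its $t-1$ branch edges. Writing $\delta(t):=C(t+1)-C(t)$, this difference equals $\delta(b)-\delta(a-1)$, so it suffices to prove that $\delta$ is strictly increasing on the integers in $[b,a-1]$, i.e.\ that $C$ is strictly convex there; then $b\le a-2$ forces $\delta(b)<\delta(a-1)$ and hence $\ABC(G')<\ABC(G)$, contradicting minimality.

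The core of the argument is therefore the strict convexity of $C$, and this is where the real work and the main obstacle lie. I would split the second difference $C(t-1)-2C(t)+C(t+1)$ into the parent-edge part coming from $f(d(z),\cdot)$ and the branch part coming from $\phi(t):=(t-1)f(t,4)$. A direct computation, treating $f(p,q)=\sqrt{(p+q-2)/(pq)}$ as a function of $q$, shows that $f(d(z),\cdot)$ is strictly convex for $d(z)\ge 3$ and contributes a positive term. The difficulty is that $\phi$ is itself (discretely) concave, so its second difference is negative and the two effects are of comparable magnitude at small degrees. The delicate point is that the positive contribution of $f(d(z),\cdot)$ must dominate, and this is exactly where the hypothesis $d(z)\ge a\ge t+1$ is essential: the convexity of $f(d(z),\cdot)$ increases with $d(z)$, so the smallest admissible value $d(z)=a$ is the worst case. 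I expect to settle this either through the auxiliary monotonicity and convexity estimates collected in the appendix or, for the few smallest degree combinations (around $b=5$, $a=7$), by a direct verification of the second-difference inequality, which is positive but only by a narrow margin.
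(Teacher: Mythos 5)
Your proposal is correct and rests on the same exchange as the paper --- detaching one $B_3$-branch from $x$ and reattaching it to $y$ --- but it organizes the estimate genuinely differently. The paper writes the total change as a function $g_1(d(z),\Delta,d(y))$, argues it decreases in $d(z)$ (so the worst case is $d(z)=d(x)$, exactly the extremal configuration you identify via the greedy-tree property) and decreases in $\Delta$ (so only $\Delta=2$ need be examined, via Propositions~\ref{appendix-pro-030}, \ref{appendix-pro-030-2} and \ref{appendix-pro-040}), and then finds that the resulting one-variable bound $g_2(d(y))$ has its unique zero at $d(y)\approx 4.0495$ and is negative beyond it. Your telescoping reformulation, $\ABC(G')-\ABC(G)=\delta(b)-\delta(a-1)$ with $C(t)=f(d(z),t)+(t-1)f(t,4)$, is algebraically exact (I checked the edge accounting), and strict discrete convexity of $C$ disposes of all $\Delta\ge 2$ at once, eliminating the paper's separate monotonicity-in-$\Delta$ step. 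It also explains the hypotheses sharply: the second difference $C(t-1)-2C(t)+C(t+1)$ at the extremal pair $(t,z)=(6,7)$ equals about $4.7\cdot 10^{-4}$ --- this is precisely $-g_2(5)$ in the paper's notation, so your convexity threshold and the paper's root $4.0495$ are the same number seen from two sides --- whereas at $(t,z)=(5,6)$ it is about $-4.5\cdot 10^{-5}$, i.e.\ convexity genuinely fails one step below the hypothesis $d(y)\ge 5$, which is exactly why Proposition~\ref{pro-P3-20} must allow $d(y)=d(x)-2$. One caveat: your claim that the convexity of $f(d(z),\cdot)$ increases in $d(z)$ (so that the minimal admissible $d(z)$, and ultimately $z=t+1$, is the worst case) is a monotonicity statement about \emph{second} differences, which the appendix propositions (first differences only) do not supply; it is numerically true on the needed range but requires its own short argument, in addition to the positivity check at $z=t+1$ for all $t\ge 6$ (the margin there grows with $t$, since the convexity of $f(z,\cdot)$ decays like $t^{-5/2}$ for $z\ge t+1$ while the concavity defect of $(t-1)f(t,4)$ decays like $t^{-3}$). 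The paper's own proof has the parallel soft spot --- its assertion that one first difference ``increases faster'' in $d(z)$ than another is likewise an unproved second-difference claim --- so with that inequality verified your route is complete, and arguably cleaner and more uniform than the paper's reduction to $\Delta=2$ plus root-finding.
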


\begin{proof}

Assume that the proposition is false and that $d(x)+\Delta = d(y)$, where $\Delta\geq 2$.
Apply the transformation $\mathcal{T}$ illustrated in Figure~\ref{fig-P3-10}.
\begin{figure}[!ht]
\begin{center}
\includegraphics[scale=0.75]{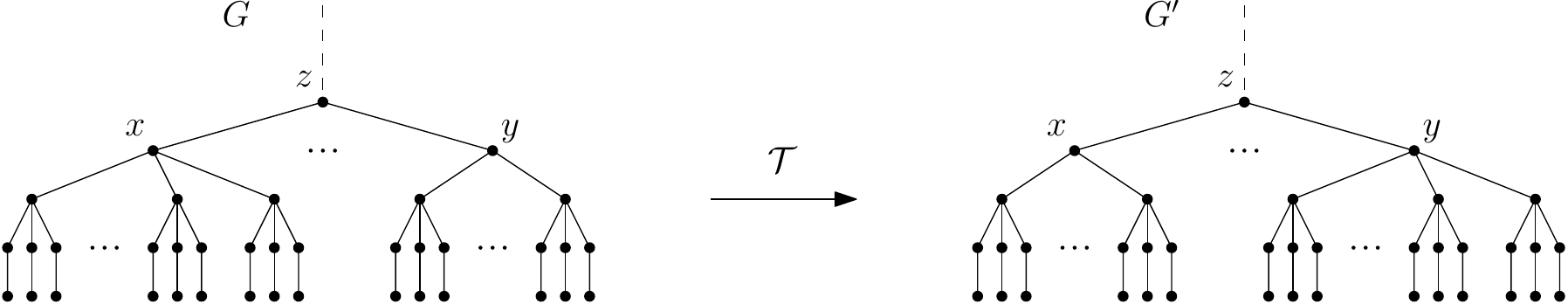}
\caption{The transformation $\mathcal{T}$ from the proof of  Proposition~\ref{pro-P3-10}.}
\label{fig-P3-10}
\end{center}
\end{figure}
After applying $\mathcal{T}$ the degree of the vertex $x$ decreases by $1$,
while the degree of $y$ increases by $1$.
The rest of the vertices do not change their degrees.
The change of the ABC index is then
\begin{eqnarray*} \label{eq-pro-P3-10-00}
&&-f(d(z),d(x))+f(d(z),d(x)-1)   \nonumber \\
&& -f(d(z),d(y))+f(d(z),d(y)+1) \nonumber \\
&&+(d(x)-2)(-f(d(x),4)+f(d(x))-1,4))  \nonumber \\
&&  -f(d(x),4)+f(d(y)+1,4)  \nonumber  \\
&& +(d(y)-1)(-f(d(y),4)+f(d(y)+1,4)), \nonumber 
\end{eqnarray*}
or, by setting $d(x)=d(y)+\Delta$, we obtain that the change of the ABC index is
\begin{eqnarray*} \label{eq-pro-P3-10}
g_1(d(z), \Delta, d(y)) &=&-f(d(z),d(y)+\Delta)+f(d(z),d(y)+\Delta-1)  \nonumber \\
 && -f(d(z),d(y))+f(d(z),d(y)+1) \nonumber \\
&&+(d(y)+\Delta-2)(-f(d(y)+\Delta,4)+f(d(y)+\Delta-1,4))  \nonumber \\
&&  -f(d(y)+\Delta,4)+f(d(y)+1,4) \nonumber  \\
&& +(d(y)-1)(-f(d(y),4)+f(d(y)+1,4)). \nonumber 
\end{eqnarray*}

\noindent
By Proposition~\ref{appendix-pro-030} (see the appendix), 
the expression $-f(d(z),d(y)+\Delta)+f(d(z),d(y)+\Delta-1)$ increases in $d(z)$, 
while  by Proposition \ref{appendix-pro-030-2}, $-f(d(z),d(y)+1))+f(d(z),d(y))$
decreases in $d(z)$. It follows that  for fixed $\Delta \geq 2$, $-f(d(z),d(y)+1))+f(d(z),d(y))$
increases faster in $d(z)$ than $-f(d(z),d(y)+\Delta)+f(d(z),d(y)+\Delta-1)$, 
or  expressed alternatively, $-(-f(d(z),d(y)+1))+f(d(z),d(y)))=-f(d(z),d(y))+f(d(z),d(y)+1))$
decreases faster in $d(z)$ than $-f(d(z),d(y)+\Delta)+f(d(z),d(y)+\Delta-1))$ increases in $d(z)$.
Consequently, $g_1(d(z), \Delta, d(y))$ decreases in $d(z)$, and it is maximal when $d(z)$ is minimal.
By Theorem~\ref{thm-DS}, it follows that the minimal possible value of $d(z)$ is  $d(x)=d(y)+\Delta$.

Now, we show that $g_1(d(z), \Delta, d(y))$ decreases in $\Delta$.
By Propositions~\ref{appendix-pro-030} and \ref{appendix-pro-030-2}, 
expressions $-f(d(z),d(y)+\Delta)+f(d(z),d(y)+\Delta-1)$ and   $-f(d(y)+\Delta,4)+f(d(y)+1,4)$  decrease in $\Delta$.
By Proposition~\ref{appendix-pro-040} (see the appendix) it follows that $(d(y)+\Delta-2)(-f(d(y)+\Delta,4)+f(d(y)+\Delta-1,4))$ also decreases in $\Delta$.
Therefore, $g_1(d(z), \Delta, d(y))$ too decreases in  $\Delta$, and the upper bound on $g_1(d(z), \Delta, d(y))$ is obtained for
$\Delta=2$ (which is the minimal value of $\Delta$ due to the assumptions of the proposition).

Bearing these in mind, we obtain that $g_1(d(z), \Delta, d(y))$ has the following upper bound
\begin{eqnarray*} \label{eq-pro-P3-30}
g_1(d(y)+2, 2, d(y)) = g_2(d(y))&=&-f(d(y)+2,d(y)+2)+f(d(y)+2,d(y)+1)  \nonumber \\
 && -f(d(y)+2,d(y))+f(d(y)+2,d(y)+1) \nonumber \\
&&+d(y)(-f(d(y)+2,4)+f(d(y)+1,4))  \nonumber \\
&&  -f(d(y)+2,4)+f(d(y)+1,4) \nonumber  \\
&& +(d(y)-1)(-f(d(y),4)+f(d(y)+1,4)). \nonumber 
\end{eqnarray*}
The function $g_2(d(y))$, $d(y) >4$, has only one zero at $d(y)=4.04954$, and for $d(y) > 4.04954$ is negative.
Thus, for $d(y) \geq 5$, $\Delta \geq 2$,  
the change of the ABC index after 
applying the transformation $\mathcal{T}$ is negative, which is a contradiction to the initial assumption that $G$ is a minimal-ABC tree.

\end{proof}

\noindent
Modifying the above proof by considering that $\Delta  \geq 3$, the following result can be obtained.
\begin{re}
The function $g_1(d(z), \Delta, d(y))$ from Proposition~\ref{pro-P3-10}  is negative for $\Delta  \geq 3$ and $d(y) \geq 4$.
\end{re}

\noindent
And, we can obtain a modified version of Proposition~\ref{pro-P3-10}.

\begin{pro} \label{pro-P3-20}
Let $x$ und $y$ be vertices of a minimal-ABC tree $G$ that have a common parent vertex $z$,  such that $d(x) \geq d(y) \geq 4$.
If $x$ and $y$ have only $B_3$-branches as children, then either $d(y)=d(x)$, $d(y)=d(x)-1$ or  $d(y)=d(x)-2$.
\end{pro}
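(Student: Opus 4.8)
The plan is to run the proof of Proposition~\ref{pro-P3-10} again, but with the contradiction hypothesis strengthened from $\Delta \geq 2$ to $\Delta \geq 3$. Concretely, I would suppose for contradiction that $d(x) = d(y) + \Delta$ with $\Delta \geq 3$, and apply the \emph{same} transformation $\mathcal{T}$ that detaches a single $B_3$-branch from $x$ and reattaches it to $y$. Since the hypothesis guarantees that both $x$ and $y$ have only $B_3$-branches as children, such a branch is available, so $\mathcal{T}$ is well defined; after the move $d(x)$ decreases by one, $d(y)$ increases by one, all other degrees are unchanged, and hence the change in the ABC index is once more exactly $g_1(d(z), \Delta, d(y))$. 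No new transformation or bookkeeping is needed — the whole gain comes from enlarging the forbidden gap.

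The next step is to observe that the two monotonicity reductions transfer verbatim. Exactly as before, the appendix estimates (Propositions~\ref{appendix-pro-030} and \ref{appendix-pro-030-2}) force $g_1$ to decrease in $d(z)$, so it is maximized at the smallest admissible value $d(z) = d(x) = d(y) + \Delta$, which is the value dictated by the greedy structure through Theorem~\ref{thm-DS}. Likewise, Propositions~\ref{appendix-pro-030}, \ref{appendix-pro-030-2}, and \ref{appendix-pro-040} together give that $g_1$ decreases in $\Delta$, so under the hypothesis $\Delta \geq 3$ its upper bound is now attained at the boundary value $\Delta = 3$ rather than at $\Delta = 2$. This collapses the whole problem to the single one-variable function $g_3(d(y)) := g_1\bigl(d(y)+3,\,3,\,d(y)\bigr)$, obtained by substituting $d(z) = d(y)+3$ and $\Delta = 3$ into the explicit formula for $g_1$.

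The only genuine work, and the step I expect to be the \textbf{main obstacle}, is the numerical claim that $g_3(d(y)) < 0$ for every $d(y) \geq 4$. This is precisely where the improvement over Proposition~\ref{pro-P3-10} is realized: there the analogue $g_2$ had its unique root at $d(y) = 4.04954$, just above $4$, which is exactly why that proposition was restricted to $d(y) \geq 5$. The point of passing to $\Delta = 3$ is that the larger degree gap supplies enough extra gain to shift this root strictly below $4$, so that $d(y) = 4$ lands safely inside the negative region. I would therefore verify that $g_3$ has a single zero below $4$ and is negative beyond it — the same qualitative behaviour already recorded for $g_2$, and the content of the preceding Remark applied at $\Delta = 3$. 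Once $g_3(d(y)) < 0$ is established for $d(y) \geq 4$, the transformation $\mathcal{T}$ strictly lowers the ABC index, contradicting the minimality of $G$; hence $\Delta \leq 2$, i.e.\ $d(y) = d(x)$, $d(y) = d(x)-1$, or $d(y) = d(x)-2$, as claimed.
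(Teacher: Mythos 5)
Your proposal matches the paper's own treatment: the paper proves Proposition~\ref{pro-P3-20} exactly by re-running the proof of Proposition~\ref{pro-P3-10} with $\Delta \geq 3$, noting (in the remark preceding the statement) that $g_1(d(z),\Delta,d(y))$ is then negative already for $d(y)\geq 4$, so that $\Delta\leq 2$ must hold. Your reduction via the same transformation $\mathcal{T}$, the same monotonicity arguments in $d(z)$ and $\Delta$ from Propositions~\ref{appendix-pro-030}, \ref{appendix-pro-030-2} and \ref{appendix-pro-040}, and the final one-variable check at $\Delta=3$, $d(z)=d(y)+3$ is precisely this argument, correctly identifying that the enlarged gap pushes the critical root below $4$.
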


\noindent
Next, we present the main result of this paper, that gives an affirmative answer to 
Conjecture~\ref{conj-P3}.

\begin{te} \label{te-P3-20}
A minimal-ABC tree of order $n > 415$ does not contain a pendent path of length three.
\end{te}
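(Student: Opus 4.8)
The plan is to argue by contradiction. Suppose $G$ is a minimal-ABC tree of order $n > 415$ that does contain a pendent path of length three; I will produce a local transformation of $G$ whose effect on the $\ABC$ index is strictly negative, contradicting the minimality guaranteed by Theorem~\ref{thm-DS}. The first task is to collapse $G$ to a rigid canonical shape. By Theorem~\ref{thm-GFI-30} there is exactly one such path, and by Theorem~\ref{thm-GFI-10} there are no internal paths, so by Theorem~\ref{thm-DS} $G$ is a greedy tree in which, apart from the single length-three path, every leaf lies on a pendent path of length two (Theorem~\ref{thm-LG-10}). Because $n > 415 > 168$, Theorems~\ref{no-B1star} and~\ref{no-B2star} exclude all $B_1$- and $B_2$-branches, while Lemma~\ref{le-noBk-star} excludes every $B_k^*$-branch with $k \geq 4$; hence the length-three path must sit inside a $B_2^*$- or a $B_3^*$-branch, and all remaining branches are $B_k$-branches with $k \geq 3$.

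Next I would fix the degrees along the upper part of the tree. Repeatedly applying Propositions~\ref{pro-P3-10} and~\ref{pro-P3-20} to the vertices whose children are $B_3$-branches forces siblings to differ in degree by at most one or two, so the greedy tree is almost degree-regular at every level. Together with the absence of $B_1$- and $B_2$-branches, this leaves only a short, explicitly parametrized list of possibilities for the degrees of the special branch vertex, of its parent $z$, and of the root, each determined up to bounded slack by $n$.

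The decisive step is a surgery $\mathcal{T}'$ that removes the defect. On each candidate configuration I would delete the extra degree-two vertex of the length-three path and reinvest it, together with a bounded number of neighbouring pendent paths of length two, so as to assemble one more genuine $B_3$- (or $B_4$-) branch, adjusting the degree of $z$ as the greedy rule requires. Writing the change of the $\ABC$ index exactly as the functions $g_1$ and $g_2$ were written in the proof of Proposition~\ref{pro-P3-10}, and invoking the monotonicity estimates of Propositions~\ref{appendix-pro-030}, \ref{appendix-pro-030-2} and~\ref{appendix-pro-040}, the sign question reduces to a single-variable inequality in essentially the root degree, equivalently in $n$. Solving this inequality is what produces the threshold, and I expect it to turn negative precisely once $n$ passes $415$.

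The hard part will be this last step. The transformation has to be designed so that its output is again an admissible greedy tree that $G$ can legitimately be compared against, and it must be carried out uniformly across the $B_2^*$- and $B_3^*$-cases as well as across the finitely many borderline values of $n$ where the balancing propositions still leave slack. Moreover, the relevant differences of $f$ are individually tiny and nearly cancel, so the monotonicity bounds have to be applied with enough numerical precision to locate the crossover $n = 415$ exactly rather than only asymptotically; pinning down this sharp constant, rather than merely showing that the index decreases for all sufficiently large $n$, is the real obstacle.
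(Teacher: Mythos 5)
Your overall framework---argue by contradiction, reduce the branch types via Lemma~\ref{le-noBk-star} and Theorems~\ref{thm-GFI-30}, \ref{no-B1star}, \ref{no-B2star}, then kill the defect with a local transformation---matches the paper's setup, but the central structural claim in your plan is a genuine gap. You assert that Propositions~\ref{pro-P3-10} and~\ref{pro-P3-20} collapse $G$ to ``a short, explicitly parametrized list of possibilities'' for the degrees near the defect, determined by $n$ up to bounded slack. Those propositions only compare two \emph{siblings} $x,y$ all of whose children are $B_3$-branches; they say nothing about vertices whose children are big vertices, nor about how many levels $G$ has, so when the root sits at level $\geq 6$ no such bounded parametrization follows. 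The paper's proof is organized precisely around this difficulty: it splits into three cases by the level of the root ($\geq 6$, $5$, and $4$) and uses six different transformations $\mathcal{T}_1,\dots,\mathcal{T}_6$. Moreover, no single transformation has negative $\ABC$-change over the whole parameter range---e.g.\ in Case~2 the function $g_{12}$ fails for small $d(x)$ and $g_{22}$ fails for large $d(x)$ or small $k$---so the proof must take the \emph{lower envelope} of two transformations over pairs $(d(x),k)$ (Tables~\ref{T1-parameters} and~\ref{T2-parameters}, Figure~\ref{fig-Table2-10}). Your single surgery $\mathcal{T}'$ would run into exactly this obstruction: one reinvestment move cannot be improving uniformly, and your outline contains no mechanism for covering the complementary ranges.

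Your expectation that the threshold emerges from one single-variable crossover ``in essentially the root degree, equivalently in $n$'' also misplaces where $415$ comes from. In the paper, all cases with root level $\geq 5$ yield a negative change with \emph{no} threshold in $n$ at all; the constant appears only in Case~3, where $G$ is an improper Kragujevac tree and the spare degree-two vertex is reinvested not as an extra $B_3$- or $B_4$-branch (your proposal) but to \emph{split the root} into two hubs of degrees roughly $d(z)/2$ (transformation $\mathcal{T}_6$), with $g_6(d(z))<0$ exactly for $d(z)>59$, i.e.\ $n>415$. Whether your alternative surgery (building one more $B_3$-branch) is even improving in that regime is unverified and doubtful, since the gain from shedding the defect is tiny compared with the cost of increasing $d(z)$. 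A smaller slip: for $n>415$, Theorem~\ref{no-B2star} already excludes the $B_2^*$-alternative outright (it can occur only at orders $161$ and $168$), and the reduction leaves only $B_3$-branches plus possibly one $B_3^*$-branch, so carrying $B_2^*$ and general $B_k$, $k\geq 3$, through the argument only inflates the case list you would have to parametrize. In short, you have reconstructed the skeleton and the right auxiliary results, but the substance of the proof---the level-based case analysis, the specific transformations, and the envelope arguments over all degree ranges---is absent, and the route you propose would not close as stated.
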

\begin{proof}
Denote a minimal-ABC tree of order $n \geq 415$ by $G$ and its root vertex by $z$.
Assume that the claim of the theorem
is false and there is one $P_3$ path in $G$ (i.e., one $B^*_3$-branch). 
Assume also that the leaf of the only path of length $3$ belongs to level $0$ of $G$.
By Lemma~\ref{le-noBk-star} and Theorems~\ref{thm-GFI-30},~\ref{no-B1star}~and~\ref{no-B2star}, it follows that the only type of branches
that $G$ can have are
$B_3$-branches and maybe one $B^*_3$-branch. 
We distinguish three main cases with respect to the maximal number of levels of $G$, or
with other words, with respect to that to which level of $G$ the root vertex $z$ belongs.

\bigskip

\noindent
{\bf Case $1.$}   {\it The root vertex $z$ of $G$ is at level $\geq 6$.}

\noindent
As consequence of Theorem~\ref{thm-DS}, all vertices at level $5$ are big, and at least $d(z)-1$ of them
do not have  any $B_3$-branch as child.
Denote this set with at least $d(z)-1$ vertices by $\mathcal{L}_5$.
If $d(z)=4$ then, by the same theorem, all vertices at level $5$  have degree $4$. 
If $d(z) \geq 5$, then consider one vertex $y$ from $\mathcal{L}_5$.
Let $x$ be child of $y$ with the smallest degree. 
We denote the other children of $y$ by  $x_i, i=1, \dots, d(y)-2$.
Without loss of generality, we may assume that $x$ is a parent of a $B_3^*$-branch.

\noindent
Here, we first apply the transformation $\mathcal{T}_1$ 
depicted  in 
Figure~\ref{fig-P3-20}.

\begin{figure}[!h]
\begin{center}
\includegraphics[scale=0.75]{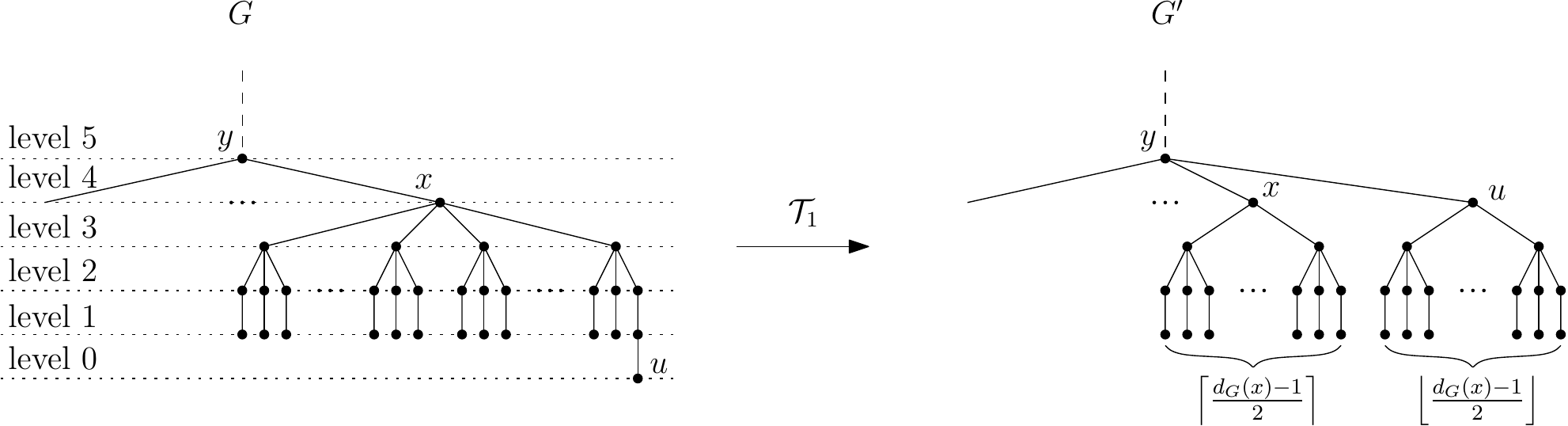}
\caption{The transformation $\mathcal{T}_1$ from the proof of  Theorem~\ref{te-P3-20}, Case $1.$}
\label{fig-P3-20}
\end{center}
\end{figure}
After applying $\mathcal{T}_1$ the degree of the vertex $y$ increases by $1$,
the degree of $x$ decreases to $\lceil ( d(x)-1)/2) \rceil  +1$
while the degree of $u$ increases form $1$ to $\lfloor ( d(x)-1)/2) \rfloor  +1$.
The rest of the vertices do not change their degrees.
The change of the ABC index is smaller than
\begin{eqnarray} \label{eq-pro-P3-110}
g_1(d(x), d(y), d(x_i)) =
&&-f(d(y),d(x))+f \left(d(y)+1,\left\lceil \frac{d(x)-1}{2} \right \rceil  +1\right)   \nonumber \\
&&-f(2,1)+f\left(d(y)+1,\left\lfloor \frac{d(x)-1}{2} \right \rfloor  +1\right)  \nonumber \\
&& +\sum_{i=1}^{d(y)-2}(-f(d(y),d(x_i))+f(d(y)+1, d(x_i))  \nonumber \\
&& + \left( \left\lceil \frac{d(x)-1}{2} \right \rceil \right) \left(-f\left(d(x),4\right) + f\left(\left\lceil \frac{d(x)-1}{2} \right \rceil  +1,4\right) \right) \nonumber \\
&& + \left( \left\lfloor \frac{d(x)-1}{2} \right \rfloor \right)  \left(-f\left(d(x),4\right) + f\left(\left\lfloor \frac{d(x)-1}{2} \right \rfloor  +1,4\right) \right).  \nonumber 
\end{eqnarray}
We would like to note that we slightly abuse the notation of a multivariable function in the case of the function $g_1$ (as well as in some later examples).
Namely, instead of $g_1(d(x), d(y), d(x_1),$ $ \dots, d(x_{d(y)-2}))$, we write $g_1(d(x), d(y), d(x_i))$.

By Proposition~\ref{appendix-pro-030} $-f(d(y),d(x_i))+f(d(y)+1, d(x_i)$ decreases in $d(x_i)$.
$d(x_i)$ can be not smaller than $d(x)$, since $x$ is the child of $y$ with the smallest degree.
So, by setting $d(x_i)=d(x)$, we obtain the upper bound of $-f(d(y),d(x_i))+f(d(y)+1, d(x_i)$.
Together with the fact that  $f(x,y)$ is a monotonically decreasing function in $x$ and $y$,  and by Proposition~\ref{appendix-pro-030-2},
 we have that for odd $d(x)$, $g_1(d(x), d(y), d(x_i))$
is bounded from above by
\begin{eqnarray} \label{eq-pro-P3-110-40}
g_{1o}(d(x), d(y)) =
&&-f(d(y),d(x))+f \left(d(y)+1,\frac{d(x)}{2} +0.5\right)  \nonumber \\
&& -f(2,1)+f\left(d(y)+1,\frac{d(x)}{2} +0.5 \right)   \nonumber \\
&& +(d(y)-2)(-f(d(y),d(x))+f(d(y)+1, d(x))  \nonumber \\
&& + \left( d(x) -1\right) \left(-f\left(d(x),4\right) + f\left( \frac{d(x)}{2}+0.5,4\right) \right) \nonumber 
\end{eqnarray}
Consider the
function ${\hat g}_{1o}(d(x), d(y))$ comprised of two expressions of $g_{1o}(d(x), d(y))$:
$$
{\hat g}_{1o}(d(x), d(y))= f\left(d(y)+1,\frac{d(x)}{2} +0.5 \right) +  \left( d(x) -1\right) \left(-f\left(d(x),4\right) + f\left( \frac{d(x)}{2}+0.5,4\right) \right). 
$$
The first derivative of ${\hat g}_{1o}(d(x), d(y))$ with respect of $d(x)$ is

\begin{eqnarray} \label{eq-pro-P3-110-50}
 \frac{ \partial {\hat g}_{1o}(d(x), d(y))}{ \partial d(x)} &=& \frac{1}{2} \left(\sqrt{\frac{2.5 +0.5 d(x)}{0.5 +0.5 d(x)}}-\sqrt{\frac{2+d(x)}{d(x)}}  \right.  \nonumber \\
 && \left. +\frac{1}{2} (d(x)-1) \left(-\frac{1.}{(0.5 +0.5 d(x))^2  \ds \sqrt{\frac{2.5 +0.5 d(x)}{0.5 +0.5 d(x)}}}+ \frac{2}{d(x)^2 \ds \sqrt{\frac{2+d(x)}{d(x)}}}  \right)  \right.  \nonumber \\
 && \left. + \ds \frac{0.5 -0.5 d(y)}{(0.5 +0.5 d(x))^2 (1 +d(y))  \ds \sqrt{\frac{-0.5+0.5 d(x)+d(y)}{(0.5 +0.5 d(x)) (1 +d(y))}}}   \right)
\end{eqnarray}
The last expression  in (\ref{eq-pro-P3-110-50})  decreases in $d(y)$,
so  $\partial {\hat g}_{1o}(d(x), d(y)) / \partial d(x)$ is maximal for $d(y)=d(x)$ (the minimal value of $d(y)$).
$\partial {\hat g}_{1o}(d(x), d(x)) / \partial d(x)$ for $d(x) \geq 5$ is a negative function, from which follows that
${\hat g}_{1o}(d(x), d(y))$ for $d(x) \geq 5$ decreases in $d(x)$.
By Proposition~\ref{appendix-pro-030-2}, the expressions $-f(d(y),d(x))+f \left(d(y)+1,\frac{d(x)}{2} +0.5\right) $ and $-f(d(y),d(x))+f(d(y)+1, d(x)-2$
also decrease in $d(x)$. Thus, it follows that  $g_{1o}(d(x), d(y))$ decreases in $d(x)$ for $d(x) \geq 5$
and therefore $g_{1o}(d(x), d(y))$ is bounded from above by $g_{1o}(5, d(y))= \bar g_{1o}(d(y))$.
The function $\bar g_{1o}((d(y))$ does not have real roots, it is positiv and has a horizontal asymptote at $0.215937$.
Thus, we can conclude that $g_{1o}(d(x), d(y))$ increases in $d(y)$ and $\lim_{d(y) \to \infty} g_{1o}(d(x), d(y))$ is an upper bound 
of $g_{1o}(d(x), d(y))$. The first integer value of $d(x)$ for which $\lim_{d(y) \to \infty} g_{1o}(d(x), d(y))$ is $54$. Since $g_{1o}(d(x), d(y))$
decreases in $d(x)$, it follows that the change of the ABC index after apply $\mathcal{T}_1$, is negative for $d(x) \geq 54$, when $d(x)$ is odd.

For even $d(x)$, $g_1(d(x), d(y), d(x_i))$
is bounded from above by
\begin{eqnarray} \label{eq-pro-P3-110-20}
g_{1e}(d(x), d(y)) =
&&-f(d(y),d(x))+f \left(d(y)+1,\frac{d(x)}{2} +1\right)  \nonumber \\
&&  -f(2,1)+f\left(d(y)+1,\frac{d(x)}{2} \right)   \nonumber \\
&& +(d(y)-2)(-f(d(y),d(x))+f(d(y)+1, d(x))  \nonumber \\
&& + \left( \frac{d(x)}{2} \right) \left(-f\left(d(x),4\right) + f\left( \frac{d(x)}{2}+1,4\right) \right) \nonumber \\
&& + \left( \frac{d(x)}{2} -1\right) \left(-f\left(d(x),4\right) + f\left( \frac{d(x)}{2},4\right) \right). \nonumber 
\end{eqnarray}
Almost an identical derivation, as in the case when $d(x)$ is odd, also here leads to the fact that  after applying the transformation $\mathcal{T}_1$
the change of the ABC index is negative for $d(x) \geq 54$. Thus, the derivation for the case when $d(x)$ is even we will be omitted.

\smallskip
\noindent
For  $d(x) \leq 53$ we apply the transformation $\mathcal{T}_2$ illustrated in Figure~\ref{fig-P3-30}. 
Let $w$ be a vertex that has the same parent as $x$. Recall that $x$ is vertex with smallest degree
among all vertices that have same parent as $x$.
If $d(x)=4$, then by Proposition~\ref{pro-P3-20} either $d(w)=d(x), d(w)=d(x)+1$ or $d(x)+2$.
Otherwise, by Proposition~\ref{pro-P3-10} either $d(w)=d(x)$ or $d(w)=d(x)+1$.
Let $y_p$ be the parent of $y$ and $x_i, i =1, \dots, d(y)-3$, the children vertices of $y$ different than $x$ and $w$.

\begin{figure}[!h]
\begin{center}
\includegraphics[scale=0.75]{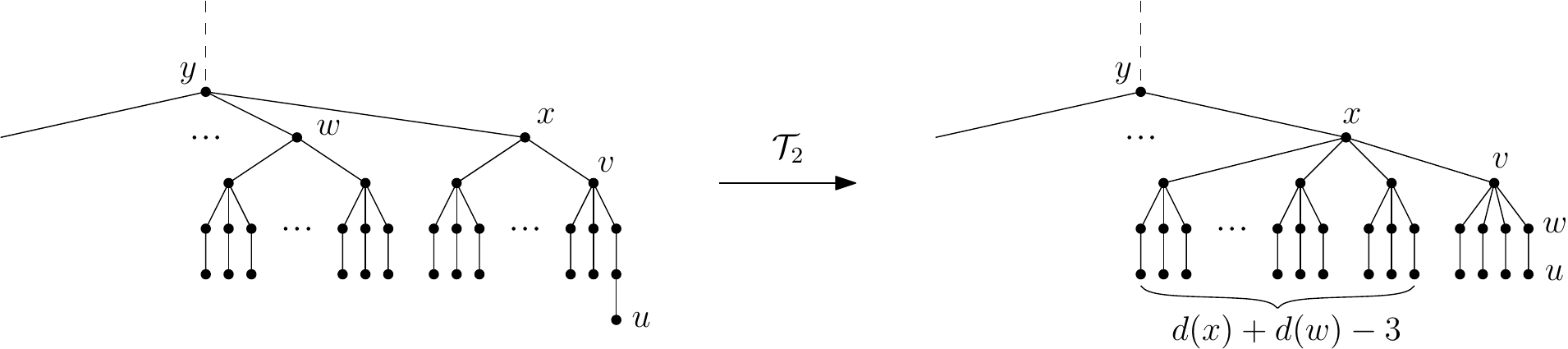}
\caption{The transformation $\mathcal{T}_2$ from the proof of  Theorem~\ref{te-P3-20}, Case $1.$}
\label{fig-P3-30}
\end{center}
\end{figure}

After applying $\mathcal{T}_2$ the degree of the vertex $x$ increases by $d(w)-1$,
the degree of $v$  increases by $1$, 
the degree of $y$ decreases by $1$,
while the degree of $w$ decreases to $2$.
The rest of the vertices do not change their degrees.
The change of the ABC index is  bounded from above by
\begin{eqnarray} \label{eq-pro-P3-120}
&&-f(d(y),d(y_p))+f(d(y)-1, d(y_p))  \nonumber \\
&& +\sum_{i=1}^{d(y)-3}(-f(d(y),d(x_i))+f(d(y)-1, d(x_i))  \nonumber \\
&&-f(d(y),d(x))+f(d(y)-1,d(x)+d(w)-1)  \nonumber \\
&&-f(d(y),d(w))+f(5,2)  \nonumber \\
&&-f(d(x),4)+f(d(x)+d(w)-1,5)  \nonumber \\
&&+(d(w)-1)(-f(d(w),4)+f(d(x)+d(w)-1,4))  \nonumber \\
&&+(d(x)-2)(-f(d(x),4)+f(d(x)+d(w)-1,4)). 
\end{eqnarray}
First, we consider the case $d(x) > 4$.
By Proposition~\ref{appendix-pro-030-2}, $-f(d(y),d(y_p))+f(d(y)-1, d(y_p))$  increases in $y_p$,
and it is bounded from above by $\lim_{d(y_p) \to \infty}=-f(d(y),d(y_p))+f(d(y)-1, d(y_p))=-1/\sqrt{d(y)}+1/\sqrt{d(y)-1}$.
By the same proposition, $-f(d(y),d(x_i))+f(d(y)-1, d(x_i))$ increases in $d(x_i)$ and by  Proposition~\ref{pro-P3-10}, 
$d(x_i)$ can be at most $d(x)+1$. Thus, $-f(d(y),d(x_i))+f(d(y)-1, d(x_i))$ has an upper bound for $d(x_i)=d(x)+1$.
By Proposition~\ref{appendix-pro-050}, $(d(w)-1)(-f(d(w),4)+f(d(x)+d(w)-1,4))$ decreases in $d(x)$, so it is maximal for
$d(w)=d(x)$.
The function $f(.,.)$ decrease in both variables. Together with the fact that  $d(w)=d(x)$ or $d(w)=d(x)+1$, we
obtain the following bound on  (\ref{eq-pro-P3-120}):
\begin{eqnarray} \label{eq-pro-P3-120-50}
g_2(d(x), d(y)) = 
&&-\frac{1}{\sqrt{d(y)}}+\frac{1}{\sqrt{d(y)-1}}  \nonumber \\
&& +(d(y)-3)(-f(d(y),d(x)+1)+f(d(y)-1, d(x)+1))  \nonumber \\
&&-f(d(y),d(x))+f(d(y)-1,2d(x)-1)  \nonumber \\
&&-f(d(y),d(x)+1)+f(5,2)  \nonumber \\
&&-f(d(x),4)+f(2d(x)-1,5)  \nonumber \\
&&+(2d(x)-3)(-f(d(x),4)+f(2d(x)-1, 4)).  \nonumber
\end{eqnarray}
By Proposition~\ref{appendix-pro-030}, it follows that 
$-f(d(y),d(x)+1)+f(d(y)-1, d(x)+1)$, $-f(d(y),d(x))+f(d(y)-1,2d(x)-1)$,
$-f(d(x),4)+f(2d(x)-1,5)$ and $-f(d(x),4)+f(2d(x)-1,4)$
increase in $d(x)$.
The $-f(d(y),d(x)+1)$ also increases in $d(x)$,
therfore $g_2(d(x), d(y))$ increases in $d(x)$, too.
A verification shows that the largest $d(x)$, for which $g_2(d(x), d(y))$
is negative, is $d(x)=45$.

For the case $d(x)=4$, recall that  $4 \leq d(w) \leq 6$ holds.
Similarly, as above, also for this particular case,
we can show that the change of the ABC index after applying $\mathcal{T}_2$  is negative.

For $d(x) \in [46, 53]$ we proceed as follows.
We set $d(y_p) \to \infty$, while  then $g_2(d(x), d(y))$ is maximal and
we consider the lower envelope of $g_{1e} (d(x), d(y))$ (resp. $g_{1o}(d(x), d(y)$) and $g_2(d(x), d(y))$.
For $d(x)=46$, $g_{1e}(46, d(y))$ is negative for $d(y) \in [46, 124]$,
while $g_2(46, d(y))$ is negative for  $d(y) \in [46, 55] \cup [69, \infty)$. Thus, for $d(x)=46$,
we can obtain always a negative change of the ABC index.
Similarly, one can make the same conclusion for $d(x)=47, \dots, 53$.

\bigskip

\noindent
{\bf Case $2.$}   {\it The root vertex $z$ of $G$ is at level $5$.}

\noindent
The tree $G$ in Figure~\ref{fig-P3-35-0} illustrates this case.
\begin{figure}[!h]
\begin{center}
\includegraphics[scale=0.75]{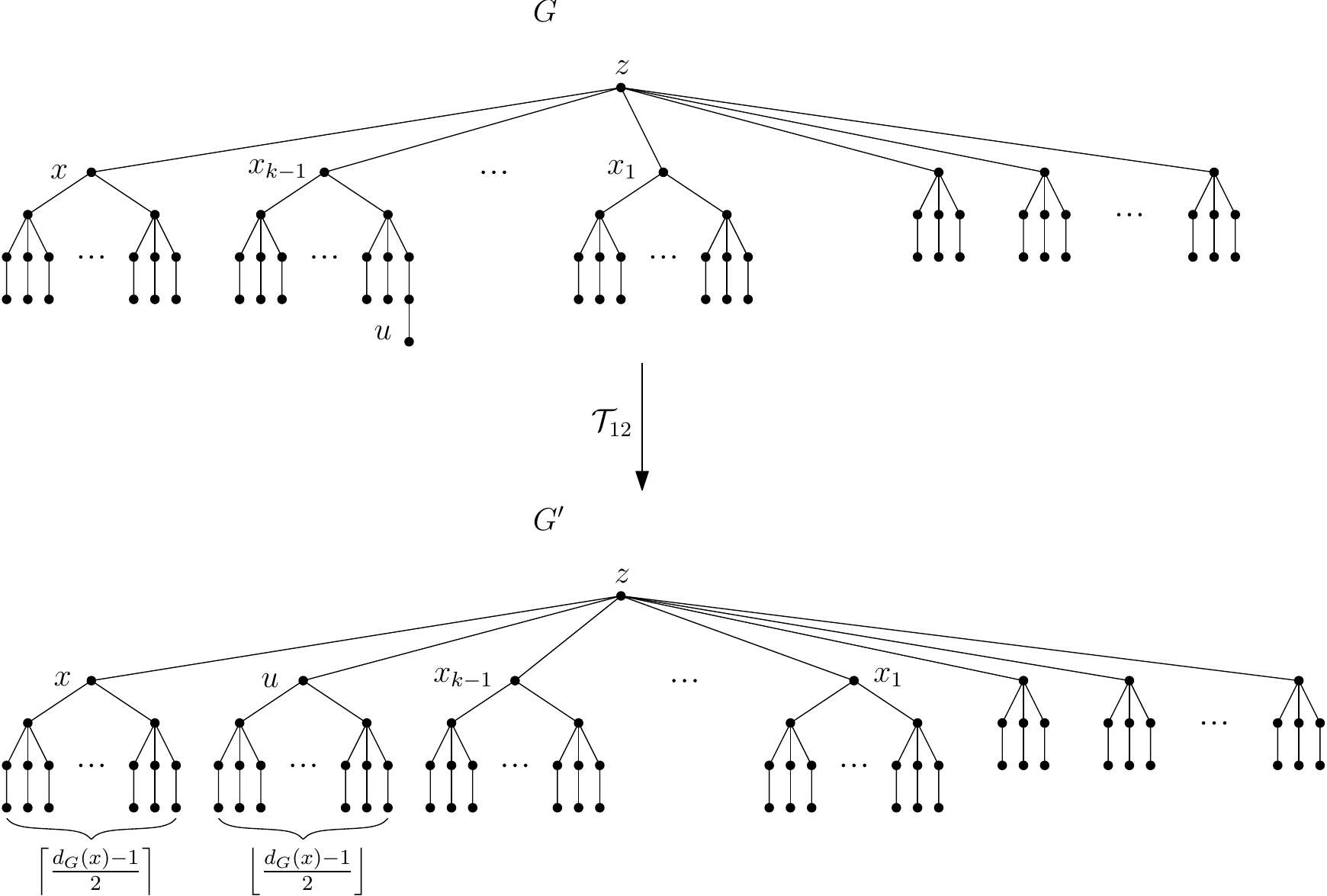}
\caption{An illustration of the transformation $\mathcal{T}_{12}$ Case $2$.}
\label{fig-P3-35-0}
\end{center}
\end{figure}
The number of the children of the root $z$ that are big vertices (vertices $x_1, x_2, \dots, x_{k-1}, x$)
is denoted by $k$.
We assume that $d(x_1) \leq d(x_2) \leq \dots   \leq d(x_{k-1}) \leq d(x)$.

\noindent
First, we apply the transformation   $\mathcal{T}_{12}$ from Case $1.$
In this context, we denote it as $\mathcal{T}_{12}$. An illustration of $\mathcal{T}_{12}$ is given in Figure~\ref{fig-P3-35-0}.
In this case, the change of the ABC index is
\begin{eqnarray} \label{eq-pro-P3-160}
g_{12}(d(x), d(z), d(x_i),k )&=&-f(d(z),d(x))+f \left(d(z)+1,\left\lceil \frac{d(x)-1}{2} \right \rceil  +1\right)   \nonumber \\
&&-f(2,1)+f\left(d(z)+1,\left\lfloor \frac{d(x)-1}{2} \right \rfloor  +1\right)  \nonumber \\
&& +\sum_{i=1}^{k-1}(-f(d(z),d(x_i))+f(d(z)+1, d(x_i))  \nonumber \\
&& +(d(z)-k)(-f(d(z),4)+f(d(z)+1, 4))  \nonumber  \\
&& + \left( \left\lceil \frac{d(x)-1}{2} \right \rceil \right) \left(-f\left(d(x),4\right) + f\left(\left\lceil \frac{d(x)-1}{2} \right \rceil  +1,4\right) \right) \nonumber \\
&& + \left( \left\lfloor \frac{d(x)-1}{2} \right \rfloor \right)  \left(-f\left(d(x),4\right) + f\left(\left\lfloor \frac{d(x)-1}{2} \right \rfloor  +1,4\right) \right). \nonumber   
\end{eqnarray}
By Proposition~\ref{appendix-pro-030}, the expression $-f(d(z),d(x_i))+f(d(z)+1, d(x_i)$ 
decreases in $d(x_i)$. 
Since $d(x_i) \geq 4$, it follows that  $-f(d(z),d(x_i))+f(d(z)+1, d(x_i) \leq -f(d(z),4)+f(d(z)+1, 4)<0$,
and consequently $g_{12}(d(x), d(z), d(x_i),k)$ is maximal when $k$ is minimal, i.e., $k=1$  and
$g_{12}(d(x), d(z), d(x_i),k)$ is bounded from above by
\begin{eqnarray} \label{eq-pro-P3-170}
&&-f(d(z),d(x))+f \left(d(z)+1,\left\lceil \frac{d(x)-1}{2} \right \rceil  +1\right)   \nonumber \\
&&-f(2,1)+f\left(d(z)+1,\left\lfloor \frac{d(x)-1}{2} \right \rfloor  +1\right)  \nonumber \\
&& +(d(z)-1)(-f(d(z),4)+f(d(z)+1, 4)) \nonumber  \\
&& + \left( \left\lceil \frac{d(x)-1}{2} \right \rceil \right) \left(-f\left(d(x),4\right) + f\left(\left\lceil \frac{d(x)-1}{2} \right \rceil  +1,4\right) \right) \nonumber \\
&& + \left( \left\lfloor \frac{d(x)-1}{2} \right \rfloor \right)  \left(-f\left(d(x),4\right) + f\left(\left\lfloor \frac{d(x)-1}{2} \right \rfloor  +1,4\right) \right). 
\end{eqnarray}
Analogous analysis as in Case $1.$ shows that (\ref{eq-pro-P3-170}) is negative for    $d(x) \geq 66$.
Similarly as for $k=1$, it can be derived that for $k=2$ the function $g_{12}(d(x), d(z), d(x_i),k)$ is
negative for $d(x) \geq 65$. Because $g_{12}(d(x), d(z), d(x_i),k)$ is decreasing in $k$, it follows that
it is negative for $d(x) \geq 65$ and $k \geq 2$.
In Table~\ref{T1-parameters} some pairs of values of parameters $d(x)$ and $k$ are given for which  $g_{12}(d(x), d(z), d(x_i),k)$
is negative.

\begin{table}
\caption{Pairs of values of the parameters $d(x)$ and $k$ for which   the change of the ABC index  (the expression (\ref{eq-pro-P3-160})),
after applying  $\mathcal{T}_{12}$ in Case~$2.$, is negative.}
\begin{center}
\begin{tabular}{cccccc}
 \toprule
\multicolumn{6}{c}{$(d(x), k)$} \\
 \midrule
($\geq 66, \, \geq 1$)  & ($\geq 65, \, \geq 2$)   & ($\geq 64, \, \geq 5$)   & $\geq 63, \, \geq 7$)    & ($\geq 62, \, \geq 9$)  & ($\geq 61, \, \geq 11$)  \\

($\geq 60, \, \geq 13$)  & ($\geq 59, \, \geq 15$)    & ($\geq 58, \, \geq 17$)  & ($\geq 57, \, \geq 19$)  & ($\geq 56, \, \geq 23$) & ($\geq 55, \, \geq 29$)   \\

($\geq 54, \, \geq 52$)  &  \multicolumn{5}{c}{}  \\
 \bottomrule
\end{tabular}
\end{center}
\label{T1-parameters}
\end{table}

\noindent
Further, we show, that  a negative change of the ABC index for $d(x) \leq 65$ can be obtained. We distinguish two cases with respect to $k$: $k \geq 2$ and $k = 1$.
Note that if $k = 0$, then $z$ is at level $4$, which is analyzed in Case $3.$

\smallskip

\noindent
{\bf Subcase $2.1.$}  {\it $k \geq 2$.}

\noindent
Here, we apply the transformation   $\mathcal{T}_2$ from Case $1.$
For this subcase we denote it as $\mathcal{T}_{22}$ and illustrate it in Figure~\ref{fig-P3-30-0}. 
\begin{figure}[H]
\begin{center}
\includegraphics[scale=0.75]{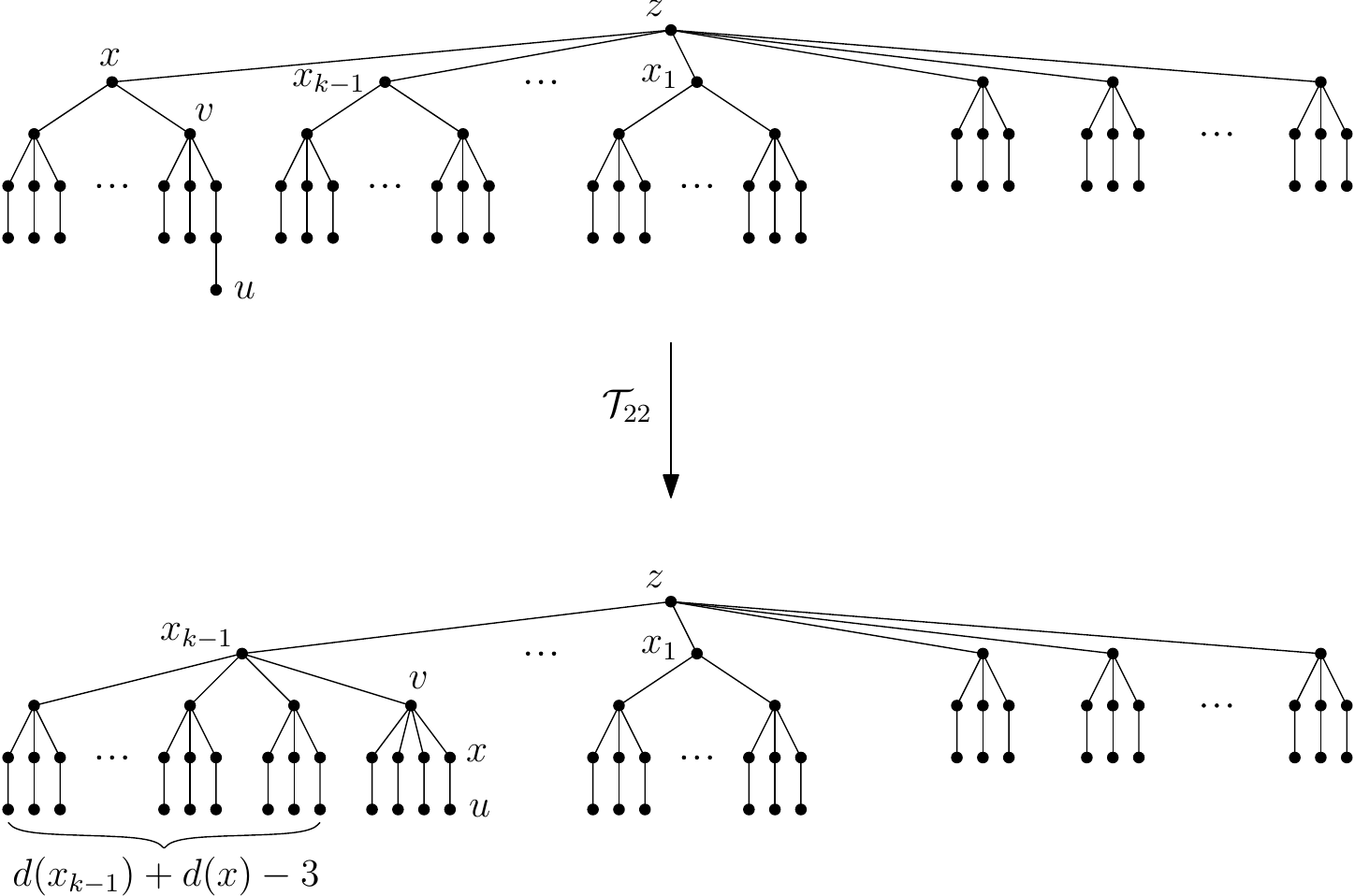}
\caption{The transformation $\mathcal{T}_{22}$ from the proof of  Theorem~\ref{te-P3-20}, Subcase $2.1.$}
\label{fig-P3-30-0}
\end{center}
\end{figure}
The change of the ABC index is
\begin{eqnarray} \label{eq-pro-P3-165}
g_{22}(d(x), d(z), d(x_i),k) = 
&& \sum_{i=1}^{k-2}(-f(d(z),d(x_i))+f(d(z)-1, d(x_i))  \nonumber \\
&& +(d(z)-k)(-f(d(z),4)+f(d(z)-1, 4)  \nonumber \\
&&-f(d(z),d(x_{k-1}))+f(d(z)-1,d(x_{k-1})+d(x)-1)  \nonumber \\
&&-f(d(z),d(x))+f(5,2)  \nonumber \\
&&-f(d(x),4)+f(d(x_{k-1})+d(x)-1,5)  \nonumber \\
&&+(d(x_{k-1})-1)(-f(d(x_{k-1}),4)+f(d(x_{k-1})+d(x)-1,4))  \nonumber \\
&&+(d(x)-2)(-f(d(x),4)+f(d(x_{k-1})+d(x)-1,4)).  \nonumber 
\end{eqnarray}
By Proposition~\ref{pro-P3-20}, we have that the possible values for $d(x_{k-1})$ are $d(x)$, $d(x)-1$ and $d(x)-2$.
For each of these three possible values for $d(x)$, with almost identically analysis as for $g_2(d(x), d(y), d(y_p),d(x),k)$, 
we can show that $g_{22}(d(x), d(z), d(x_i),k)$ is largest for $d(x_{k-1})= d(x)$, it increases in $k$ and is negative for $d(x) \leq 47$. 
Also, we can obtain that for the values of parameters $d(x)$ and $k$ given in Table~\ref{T2-parameters},
$g_{22}(d(x), d(z), d(x_i),k)$ is negative.
\begin{table}
\caption{
Pairs of values of the parameters $d(x)$ and $k$ for which   the change of the ABC index,
after applying  $\mathcal{T}_{22}$ in Subcase~$2.2.$ is negative.}
\begin{center}
\begin{tabular}{ccc}
\toprule
\multicolumn{3}{c}{$(d(x), k)$} \\
\midrule
($\leq 48$,   $\leq d(z) - 3$)  & 
($\leq 49$,   $\leq d(z) - 8$)  &
($\leq 50$,   $\leq d(z) - 17$)  \\
($\leq 51$,   $\leq d(z) - 36$)  &
($\leq 52$,   $\leq d(z) - 87$)  &
($\leq 53$,   $\leq d(z) - 516$)  \\
\bottomrule
\end{tabular}
\end{center}
\label{T2-parameters}
\end{table}

Now, we consider the pairs of $d(x)$ and $k$  for which we did obtain that $g_{22}(d(x), d(z), d(x_i),k)$
is negative: $d(x)=48$,   $k > d(z) - 3$;  $d(x)=49$,   $k > d(z) - 8$; $d(x)=50$,   $k > d(z) - 17$; $d(x)=51$,   $k > d(z) - 36$; $d(x)=52$,   $k > d(z) - 87$; $d(x)=53$,   $k > d(z) - 516$. 
In this case we take the lower envelope of $g_{22}(d(x), d(z), d(x_i),k)$ and $g_{12}(d(x), d(z),$ $d(x_i),k )$. 
Due to Proposition~\ref{pro-P3-10}, in these cases the two different possibilities
of $d(x_i)$ are $d(x_i)=d(x)$ and $d(x_i)=d(x)-1$.

For $d(x)=48$ and $k >d(z)-3$, we consider first $k=d(z)-2$. The possible values for $d(x_i)$  are $48$ and $47$.
Functions $g_{22}(48, d(z), 48,d(z)-2)$ and $g_{12}(48, d(z), 48,d(z)-2 )$, 
$g_{22}(48, d(z), 47,d(z)-2)$ and $g_{12}(48,$ $ d(z), 47,d(z)-2 )$  are depicted in Figure~\ref{fig-Table2-10} $(a)$. Recall that $d(z) \geq d(x)$
Their lower envelope of of both function has always negative values. 
For  $k =d(z)-1$ and  $k =d(z)$, we proceed identically  as for $k =d(z)-2$
and obtain that the corresponding lower envelopes of the functions $g_{22}(d(x), d(z), d(x_i),k)$ and $g_{12}(d(x,) d(z), d(x_i),k )$ 
are negative (Figure~\ref{fig-Table2-10} $(b)$ and $(c)$).

Similarly, we obtain that lower envelops in the cases $d(x)=49$ and $k> d(z) - 8$, $d(x)=50$ and $k> d(z) - 17$,
$d(x)=51$ and $k> d(z) - 36$,  $d(x)=52$ and  $k > d(z) - 87$, and $d(x)=53$ and   $k > d(z) - 516$ are negative.  
Observe that in the last two cases,  $d(z) \geq 88$ (resp., $d(z) \geq 517$), since $k >1$.
Since we apply the same argument, we omit the repetition of the analysis in these cases.

\begin{figure}[!http]
\begin{center}
\includegraphics[scale=0.8]{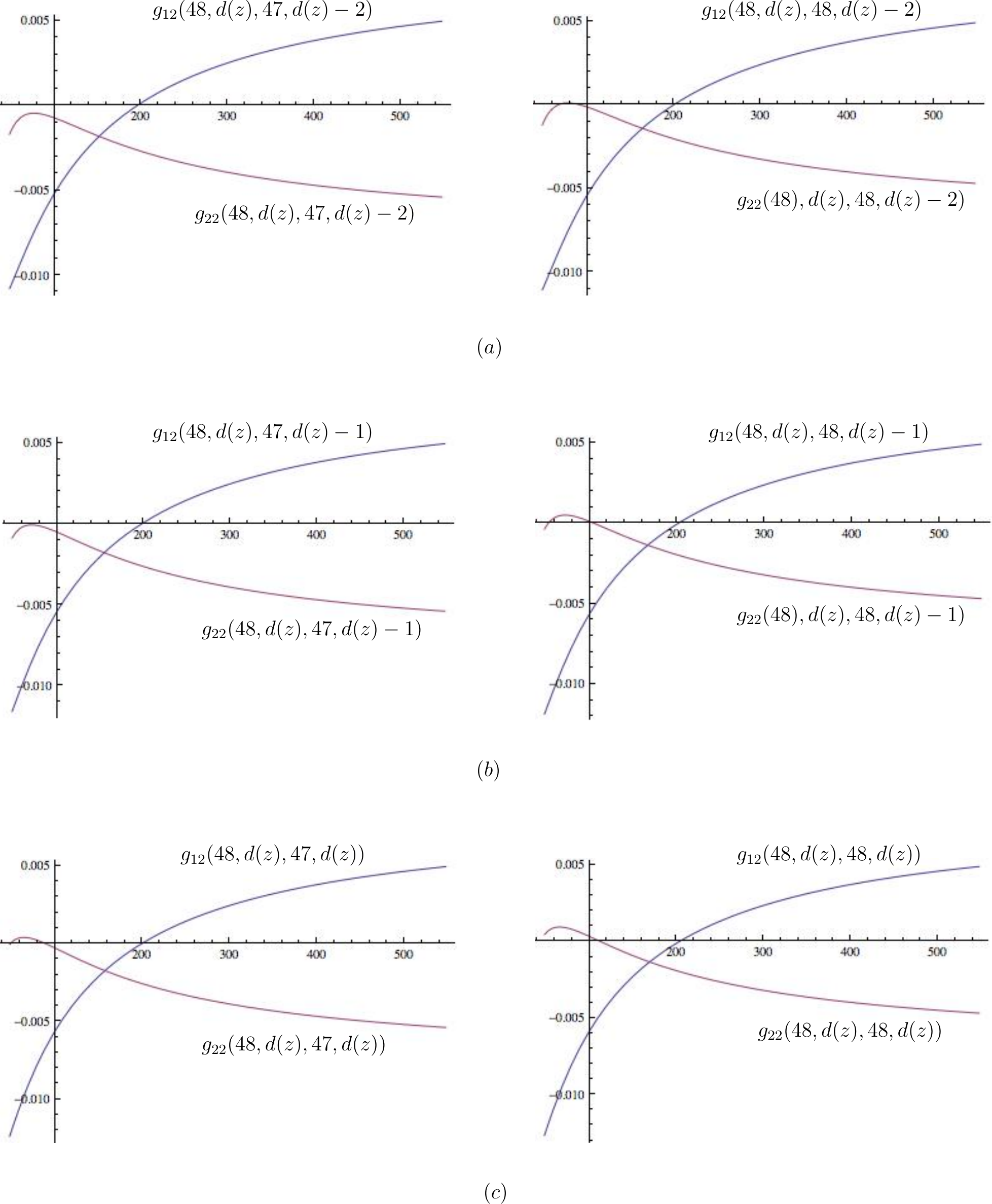}
\caption{Functions  $g_{12}(d(x), d(z), d(x_i),k)$ and $g_{22}(d(x) d(z), d(x_i),k )$ for $d(x)=48$, $d(x_i)=d(x)-1, d(x)$, 
and $k=d(z)-2, d(z)-1, d(z)$. 
The lower envelope of an appropriate pair of functions is negative, i.e., we can obtain negative change of the ABC index of $G$ for the above values of 
$d(x)$, $d(x_i)$ and $k$ by combining transformations $\mathcal{T}_{12}$ and $\mathcal{T}_{22}$.}
\label{fig-Table2-10}
\end{center}
\end{figure}

It remains to prove the theorem for $54 \leq d(x) \leq 65$ and the values of the parameter 
$k$ for which  $g_{22}(d(x), d(z), d(x_i),k)$ is not negative  (see table Table~\ref{T1-parameters}). 
Again, 
by Proposition~\ref{pro-P3-20}, we have that the possible values for $d(x_{k-1})$ are $d(x)$, $d(x)-1$ and $d(x)-2$.
We consider the lower envelope of $g_{22}(d(x), d(z), d(x_i),k)$ and $g_{12}(d(x), d(z), d(x_i),k )$. 
For example, for $d(x)=54$ and $k <52$, we consider the functions 
$g_{22}(54, d(z), 53, k)$ and $g_{12}(54, d(z), 53, k)$, and 
$g_{22}(54, d(z), 54, k)$ and $g_{12}(54, d(z), 54, k)$, for  $k = 2, \dots, 51$, whose lower envelope is negative.
Similarly, we proceed with $d(x)= 55, \dots, 65$. Due to similarity with previous cases, we omit the details here.

Next, we show that the theorem is true for $k = 1$ and $d(x) \leq 65$.

\smallskip

\noindent
{\bf Subcase $2.2.$}  {\it $k = 1$.} 
\begin{figure}[!h]
\begin{center}
\includegraphics[scale=0.75]{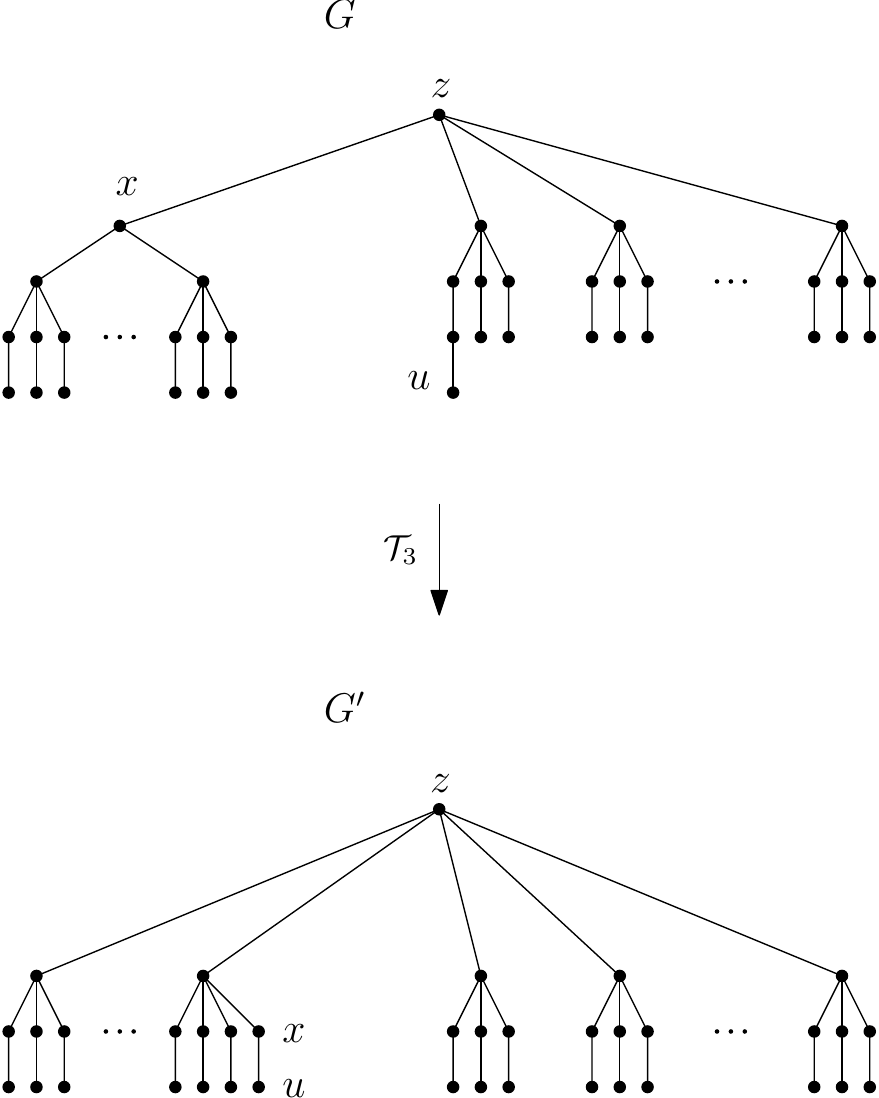}
\caption{The transformation $\mathcal{T}_{3}$ from the proof of  Theorem~\ref{te-P3-20}, Subcase $2.2$.}
\label{fig-P3-35}
\end{center}
\end{figure}

\noindent
First, we apply the transformation $\mathcal{T}_3$ depicted in Figure~\ref{fig-P3-35}.
After applying $\mathcal{T}_3$ the degree of the vertex $z$ increases by $d(x)-2$,
the degree of one child of $x$  increases by $1$, and 
the degree of $x$ decreases to $2$.
The rest of the vertices do not change their degrees.
The change of the ABC index is 
\begin{eqnarray} \label{eq-pro-P3-140}
g_3(d(x), d(z)) = 
&& (d(x)-2)(-f(d(x),4)+f(d(z)+d(x)-2, 4))  \nonumber \\
&&-f(d(x),4)+f(d(z)+d(x)-2, 5) \nonumber \\
&&-f(d(z),d(x))+f(5,2)  \nonumber \\
&& +(d(z)-1)(-f(d(z),4)+f(d(z)+d(x)-2, 4))  \nonumber 
\end{eqnarray}

\noindent
For any $d(x) \in [4, 31]$ the function $g_{3}(d(x), d(z))$ does not have real roots and is negative.
It follows that for $4 \leq d(x) \leq 31$, there exists a tree with smaller ABC index than the assumed minimal one, which is a contradiction
to the initial assumption that $G$ is a tree with minimal ABC index.
It can be also verified that for any $d(x), d(z)$  that satisfy  $4 \leq d(x) \leq d(z) \leq 62$, the function $g_{3}(d(x), d(z))$ is negative.

For $63 \leq d(x) \leq 65$, we apply the transformation $\mathcal{T}_{4}$ depicted in
Figure~\ref{fig-P3-38}.
\begin{figure}[!h]
\begin{center}
\includegraphics[scale=0.75]{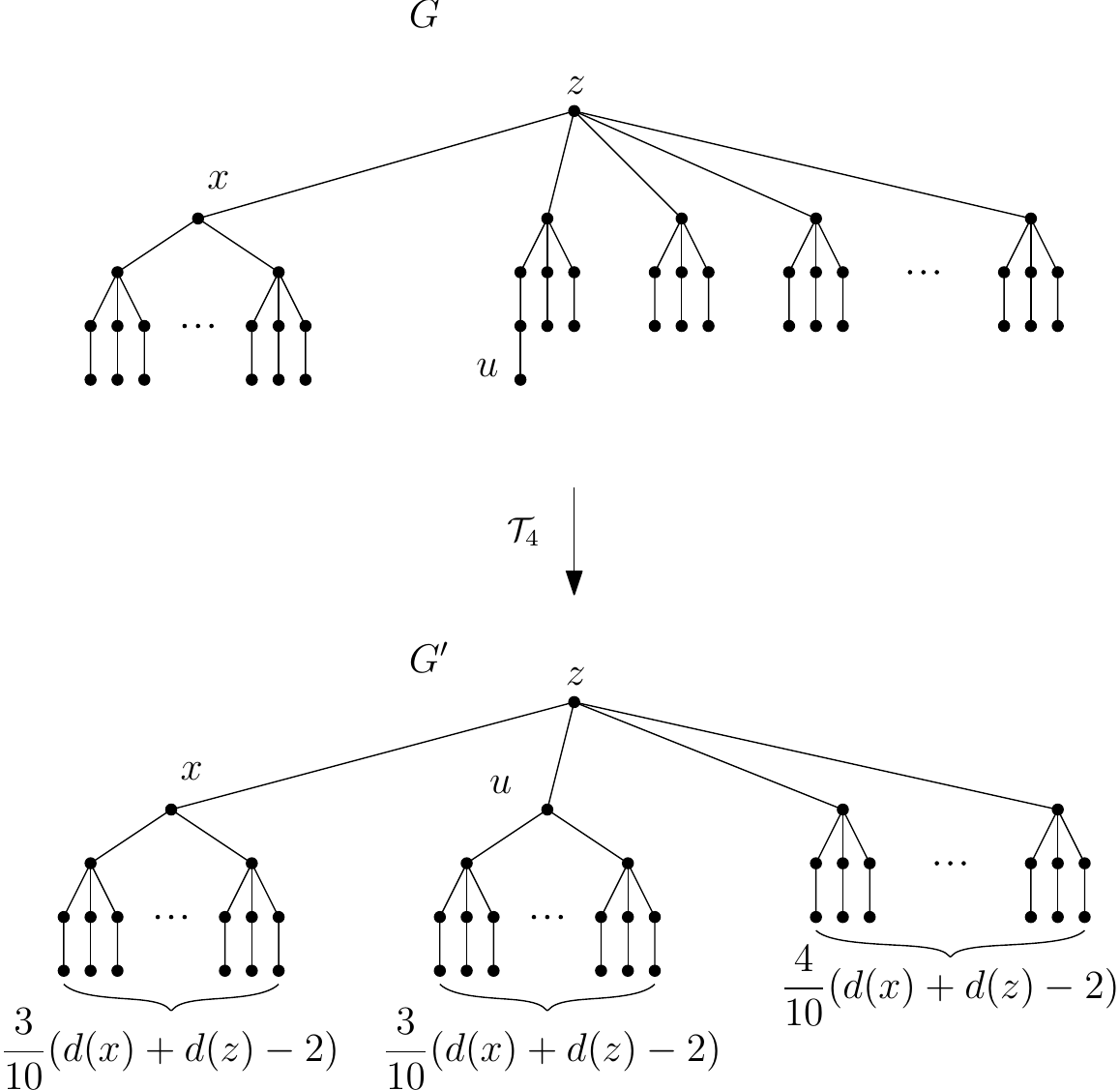}
\caption{The transformation $\mathcal{T}_{4}$ from the proof of  Theorem~\ref{te-P3-20}, Subcase $2.2$.}
\label{fig-P3-38}
\end{center}
\end{figure}
The change of the degrees of vertices in $G$, after applying $\mathcal{T}_{4}$, are as follows:
$d_{G'}(x)=3(d(x)+d(z)-2)/10+1$,  $d_{G'}(u)=3(d(x)+d(z)-2)/10+1$ and 
$d_{G'}(z)=4(d(x)+d(z)-2)/10+2$.
With respect to degree $d(x)$, we distinguish two cases: $d_{G'}(x) <d(x)$ 
and $d_{G'}(x) \geq d(x)$.

When $d_{G'}(x) < d(x)$, the change of the ABC index is
\begin{eqnarray} \label{eq-pro-P3-38-10}
g_{41}(d(x), d(z)) &=&-f(d(x),d(z))+f\left(d_{G'}(x),d_{G'}(z)\right)  \nonumber \\
&&-f(2,1)+f\left(d_{G'}(u),d_{G'}(z)\right)  \nonumber \\
&&+(d_{G'}(x)-1)\left(-f(d(x),4)+f\left(d_{G'}(x),4\right)\right)  \nonumber \\
&&+\left(d(x)-d_{G'}(x)\right)\left(-f(d(x),4)+f\left(d_{G'}(u),4\right)\right)  \nonumber \\
&&+(d_{G'}(z)-2) \left(-f(d(z),4)+f\left(d_{G'}(z),4\right)\right)  \nonumber \\
&&+\left(d(z) -d_{G'}(z)+1\right)\left(-f(d(z),4)+f\left(d_{G'}(u),4\right)\right).  \nonumber 
\end{eqnarray}
Recall that when the index of the degree is omitted, it is always assumed that it is $G$.
From $3(d(x)+d(z)-2)/10 +1 < d(x)$, it follows that $d(z)< (7d(x)-4)/3$. Also, recall that 
$d(z) \geq d(x)$.
It can be verified that for every $d(z) \in [d(x),(7d(x)-4)/3)$ and for $d(x) \in \{63, 64, 65\}$,
the function $g_{41}(d(x), d(z))$ is negative and decreasing in $d(z)$.

When $d_{G'}(x)  \geq d(x)$, from which follows $d(z) > (7d(x)-4)/3)$, the change of the ABC index is
\begin{eqnarray} \label{eq-pro-P3-38-20}
g_{42}(d(x), d(z)) &=&-f(d(x),d(z))+f\left(d_{G'}(x),d_{G'}(z)\right)  \nonumber \\
&&-f(2,1)+f\left(d_{G'}(u),d_{G'}(z)\right)  \nonumber \\
&&+(d(x)-1)\left(-f(d(x),4)+f\left(d_{G'}(x),4\right)\right)  \nonumber \\
&&+\left(d_{G'}(x)-d(x)\right)\left(-f(d(z),4)+f\left(d_{G'}(x),4\right)\right)  \nonumber \\
&&+(d_{G'}(z)-2) \left(-f(d(z),4)+f\left(d_{G'}(z),4\right)\right)  \nonumber \\
&&+\left(d_{G'}(u)-1\right)\left(-f(d(z),4)+f\left(d_{G'}(u),4\right)\right). \nonumber 
\end{eqnarray}
For $d(z) > (7d(x)-4)/3)$ and for $d(x) \in \{63, 64, 65\}$ 
the function $g_{42}(d(x), d(z))$ does not have real roots and it is negative.

It remains to show that for $32 \leq d(x) \leq 62$ the theorem holds.
To show this, we apply the transformations  $\mathcal{T}_5$, 
under the constrain  that  $d(z) > 62$ (recall that for $d(z) \leq 62$, the function $g_{3}(d(x), d(z))$ is negative for any $d(x) > 4$).
The transformation  $\mathcal{T}_{5}$ depicted in Figure~\ref{fig-P3-36}.
\begin{figure}[!h]
\begin{center}
\includegraphics[scale=0.75]{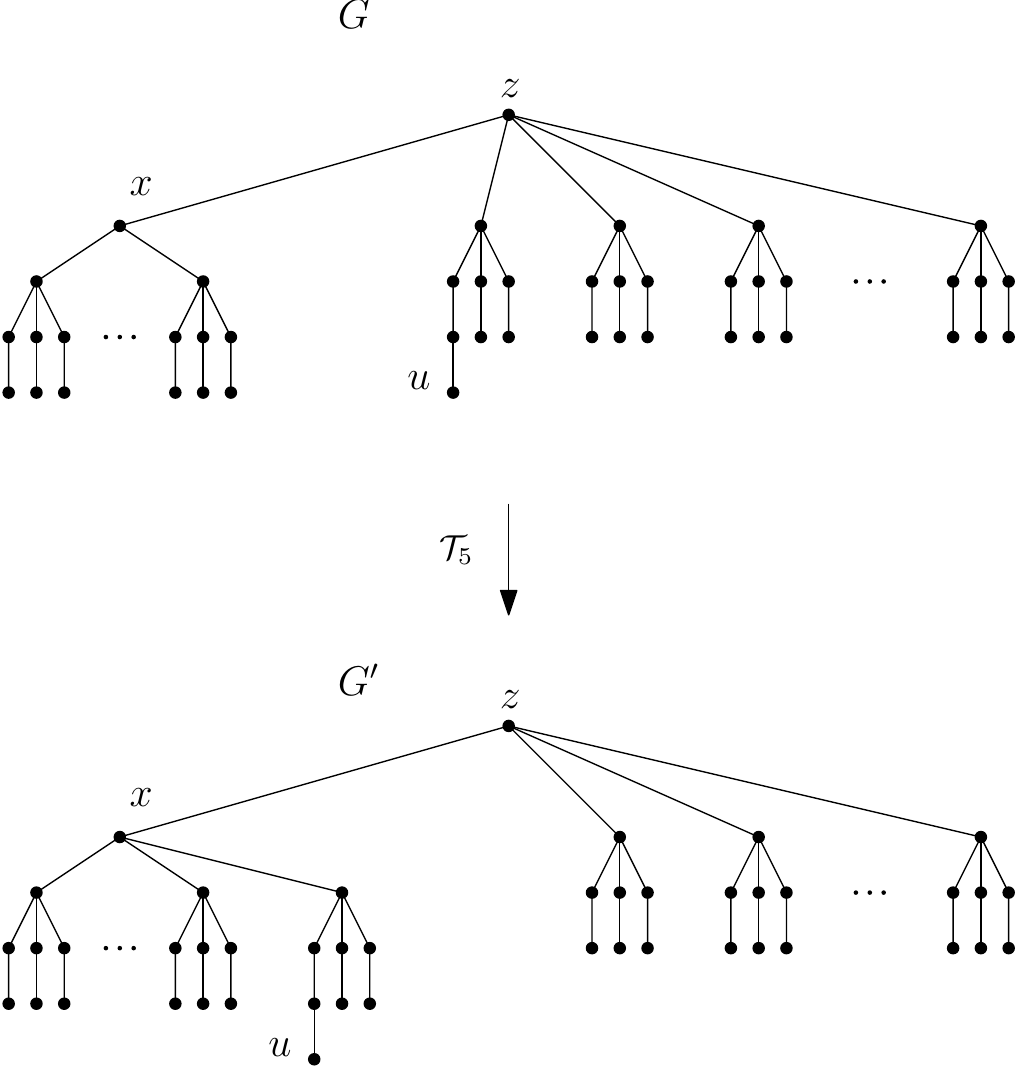}
\caption{The transformation $\mathcal{T}_{5}$ from the proof of  Theorem~\ref{te-P3-20}, Subcase $2.2$.}
\label{fig-P3-36}
\end{center}
\end{figure}
After this transformation the only vertices that change there degrees are $z$ and $x$,
namely, $z$ decreases its degree for $1$ and $x$ increases its degree for $1$.
The change of the ABC index after applying $\mathcal{T}_{5}$ is
\begin{eqnarray} \label{eq-pro-P3-165}
g_{5}(d(x), d(z)) = 
&&-f(d(z),d(x))+f(d(z)-1,d(x)+1)  \nonumber \\
&&-f(d(z),4)+f(d(x)+1,4)  \nonumber \\
&& +(d(x)-1)(-f(d(x),4)+f(d(x)+1, 4)  \nonumber \\
&& +(d(z)-2)(-f(d(z),4)+f(d(z)-1, 4) ) \nonumber 
\end{eqnarray}
Consider first the case $d(x)=32$. The function $g_{5}(32, d(z))$, for $d(z) > 62$,
does not have real roots and is negative.
Similarly, we verify that $g_{5}(d(x), d(z))$ is negative for $d(x)=33, \dots, 61$ and $d(z) > 62$.
For $d(x)=62$, $g_{5}(d(x), d(z))$ is negative for $d(z) > 63$ and for $d(z) =63$ it is zero.
Here, for $d(x)=62$ and $d(z)=63$ we apply the transformation $\mathcal{T}_{3}$.
It holds that $g_3(62, 63)=-0.0000277276$, and hence, we have shown that also for $32 \leq d(x) \leq 62$ we can
obtain a negative change of the ABC index, which concludes the proof of Case~$2$.

\bigskip

\noindent
{\bf Case $3.$}   {\it The root vertex $z$ of $G$ is at level $4$.} 
\bigskip

\noindent
In this case $G$ is an improper Kragujevac tree.
Here we apply the transformation $\mathcal{T}_6$ illustrated in Figure~\ref{fig-P3-40}.

\begin{figure}[!h]
\begin{center}
\includegraphics[scale=0.75]{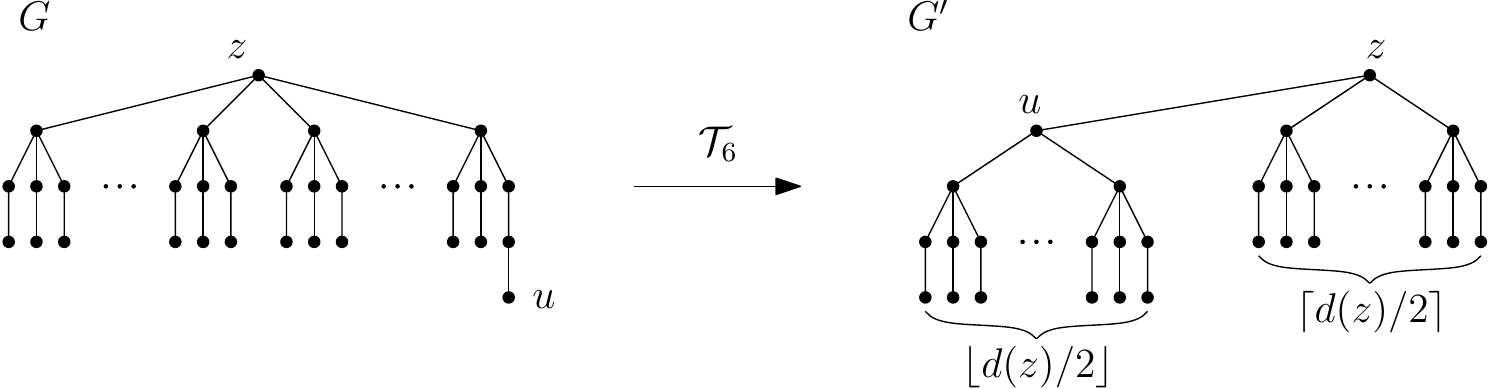}
\caption{The transformation $\mathcal{T}_6$ from the proof of  Theorem~\ref{te-P3-20}, Case $3$.}
\label{fig-P3-40}
\end{center}
\end{figure}
After applying $\mathcal{T}_6$ the degree of the vertex $z$ decreases to $\lfloor d(z)/2 \rfloor$,
while the degree of $u$ increases to $\lceil d(z)/2 \rceil$.
The rest of the vertices do not change their degrees.
The change of the ABC index is
\begin{eqnarray*} \label{eq-Case-3}
g_6(d(z)) &=&  \lceil d(z)/2 \rceil (-f(d(z),4)+f(\lceil d(z)/2 \rceil +1, 4))  \nonumber \\
&& + \lfloor d(z)/2 \rfloor (-f(d(z),4)+f(\lfloor d(z)/2 \rfloor  +1, 4))  \nonumber \\
&& -f(2,1)+f(\lceil d(z)/2 \rceil +1,\lfloor d(z)/2 \rfloor  +1))  \nonumber 
\end{eqnarray*}

For even $d(z)$, the function $g_6(d(z))$ has only one zero at $d(z) \approx 59.5903$, and for $d(z) > 59.5903$ is negative.
For odd $d(z)$, $g_6(d(z))$ has only one zero at $d(z)  \approx 59.6067$, and it is negative for $d(z) > 59.6067$.
Since $d(z)$ is an integer, it follows that in this case the change of the $ABC$ index for $d(z) > 59$ (or, $n > 415$) is negative.
This concludes the proof of the theorem. 

\end{proof}

\section[Appendix]{Appendix}\label{sec:Appendix}

The next  propositions are used in the proofs of the previous section.
The function $f(x,y)$ is defined as in (\ref{eqn:001}).

\begin{pro}[\cite{d-sptmabci-2014}] \label{appendix-pro-030}
Let $g(x,y)=-f(x,y)+f(x+\Delta x,y-\Delta y)$, with  real numbers $x, y \geq 2$,  $\Delta x \geq 0$,  $0 \leq \Delta y < y$.
Then, $g(x,y)$ increases in $x$ and decreases in $y$.
\end{pro}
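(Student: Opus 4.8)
The plan is to regard $\Delta x,\Delta y$ as fixed parameters and to verify the two monotonicity statements by differentiating $g$ in the free variable. Writing $f_1,f_2$ for the partial derivatives of $f$ with respect to its first and second argument, the chain rule gives
\[
\frac{\partial g}{\partial x}=-f_1(x,y)+f_1(x+\Delta x,y-\Delta y),\qquad
\frac{\partial g}{\partial y}=-f_2(x,y)+f_2(x+\Delta x,y-\Delta y).
\]
Hence ``$g$ increases in $x$'' is equivalent to $f_1(x+\Delta x,y-\Delta y)\ge f_1(x,y)$, and ``$g$ decreases in $y$'' is equivalent to $f_2(x+\Delta x,y-\Delta y)\le f_2(x,y)$. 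Everything therefore reduces to controlling how the first-order partials of $f$ react to simultaneously raising the first argument and lowering the second.

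Each of these two inequalities I would establish by moving from $(x,y)$ to $(x+\Delta x,y-\Delta y)$ along the axis-parallel path $(x,y)\to(x+\Delta x,y)\to(x+\Delta x,y-\Delta y)$ and signing the second partials leg by leg. On the first leg the second coordinate stays pinned at $y\ge 2$, so $f_1$ changes by $\int_x^{x+\Delta x} f_{11}(t,y)\,dt$ and $f_2$ by $\int_x^{x+\Delta x} f_{12}(t,y)\,dt$; on the second leg the first coordinate stays pinned at $x+\Delta x\ge 2$, so $f_1$ changes by $-\int_{y-\Delta y}^{y} f_{12}(x+\Delta x,s)\,ds$ and $f_2$ by $-\int_{y-\Delta y}^{y} f_{22}(x+\Delta x,s)\,ds$. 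Thus it suffices to show $f_{11}\ge 0$ whenever the second argument is at least $2$, $f_{12}\le 0$ whenever the first argument is at least $2$, and $f_{22}\ge 0$ whenever the first argument is at least $2$; granting these, both legs push $f_1$ up and both push $f_2$ down, which is exactly the claim. The ordering of the legs is the one delicate point: I traverse the first coordinate while the second is still at $y\ge 2$ precisely because the sign of $f_{11}$ is only guaranteed once its second argument is $\ge 2$, and this choice keeps every evaluation inside the region where the needed sign holds, even when $y-\Delta y<2$.

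It remains to check the three sign facts, which I would do by setting $h(x,y)=\frac{x+y-2}{xy}$ so that $f=\sqrt h$ with $h>0$ on the relevant domain. Since $f=h^{1/2}$, the signs of $f_{11},f_{12},f_{22}$ coincide with those of $hh_{11}-\tfrac12 h_1^2$, $hh_{12}-\tfrac12 h_1h_2$ and $hh_{22}-\tfrac12 h_2^2$. A direct computation from $h_1=\frac{2-y}{x^2y}$ and $h_2=\frac{2-x}{xy^2}$ yields
\[
hh_{11}-\tfrac12 h_1^2=\frac{(y-2)(4x+3y-6)}{2x^4y^2},\qquad
hh_{12}-\tfrac12 h_1h_2=-\frac{x+y-2+\tfrac12 xy}{x^3y^3},
\]
which are nonnegative for $y\ge 2$ and strictly negative for $x\ge 2$, respectively, while $f_{22}\ge 0$ follows from $f_{11}\ge 0$ by the symmetry $f(x,y)=f(y,x)$. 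The main obstacle is purely computational: carrying out the mixed-partial differentiation cleanly to obtain these factorizations, together with the bookkeeping of the path ordering so that no evaluation ever lands in the region where $f_{11}$ could change sign. Once the factorizations are in hand the monotonicity is immediate.
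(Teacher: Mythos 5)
The paper itself contains no proof of this proposition to compare against: it is imported verbatim from \cite{d-sptmabci-2014}, so your argument has to stand on its own, and it does. I verified the reduction and the computations. Reducing both monotonicity claims to the pointwise inequalities $f_1(x+\Delta x,y-\Delta y)\ge f_1(x,y)$ and $f_2(x+\Delta x,y-\Delta y)\le f_2(x,y)$ is correct, and your factorizations check out: with $h=(x+y-2)/(xy)$ one has $h_1=(2-y)/(x^2y)$, $h_{11}=2(y-2)/(x^3y)$, $h_{12}=-2/(x^2y^2)$, and indeed
\[
hh_{11}-\tfrac12 h_1^2=\frac{(y-2)(4x+3y-6)}{2x^4y^2},
\qquad
hh_{12}-\tfrac12 h_1h_2=-\frac{x+y-2+\tfrac12 xy}{x^3y^3},
\]
so $f_{11}\ge 0$ whenever the second argument is $\ge 2$ (for any positive first argument), $f_{12}<0$ whenever the first argument is $\ge 2$ (then $x-2\ge 0$ makes the numerator positive, and also forces $h>0$ even when the second argument drops below $2$), and $f_{22}(x,y)=f_{11}(y,x)\ge 0$ for $x\ge 2$ by the symmetry $f(x,y)=f(y,x)$. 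Your insistence on the path order $(x,y)\to(x+\Delta x,y)\to(x+\Delta x,y-\Delta y)$ is the one genuinely delicate point and you handled it correctly: the hypothesis only requires $y-\Delta y>0$, so the second argument may fall below $2$, where $f_{11}$ is strictly negative (since $y-2<0$ while $4x+3y-6>0$ for $x\ge 2$); the transposed path, which varies the first coordinate at height $y-\Delta y$, would therefore break, while on your path every evaluation of $f_{11}$ has second argument $y\ge 2$, every evaluation of $f_{12}$ and $f_{22}$ has first argument $\ge 2$, and $h>0$ throughout, so $f$ is smooth on both legs and the fundamental-theorem-of-calculus bookkeeping is legitimate. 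As a minor bonus, your argument yields strict monotonicity whenever $\Delta x+\Delta y>0$ (via $f_{12}<0$), slightly more than the stated weak monotonicity. The proof is complete and correct as written.
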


\noindent
Due to the symmetry of the function $f(.,.)$,  Proposition~\ref{appendix-pro-030} can be rewritten as follows.

\begin{pro} [\cite{d-sptmabci-2014}] \label{appendix-pro-030-2}
Let $g(x,y)=-f(x,y)+f(x -\Delta x,y+\Delta y)$, with  real numbers $x, y \geq 2$,  $\Delta y \geq 0$,  $0 \leq \Delta x < x$.
Then, $g(x,y)$ decreases in $x$ and increases in $y$.
\end{pro}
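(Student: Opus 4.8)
The plan is to derive Proposition~\ref{appendix-pro-030-2} directly from Proposition~\ref{appendix-pro-030} by exploiting the symmetry of $f$, exactly as the sentence preceding the statement suggests; no fresh estimate is needed. The only structural fact I would invoke about $f$ is that it is symmetric in its two arguments, which is immediate from the definition in~(\ref{eqn:001}): $f(x,y)=\sqrt{(x+y-2)/(xy)}=\sqrt{(y+x-2)/(yx)}=f(y,x)$.

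First I would rewrite the function under consideration. Writing $h(x,y)=-f(x,y)+f(x-\Delta x,\,y+\Delta y)$ and applying $f(a,b)=f(b,a)$ to each of the two terms gives
\[
h(x,y)=-f(y,x)+f(y+\Delta y,\,x-\Delta x).
\]
Next I would introduce the swapped coordinates $X=y$, $Y=x$ together with the swapped increments $\Delta X=\Delta y$, $\Delta Y=\Delta x$. In these variables the right-hand side becomes $-f(X,Y)+f(X+\Delta X,\,Y-\Delta Y)$, which is precisely the function $g$ appearing in Proposition~\ref{appendix-pro-030}.

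The following step is to check that the hypotheses of Proposition~\ref{appendix-pro-030} are satisfied for the tuple $(X,Y,\Delta X,\Delta Y)$. That proposition requires $X,Y\geq 2$, $\Delta X\geq 0$ and $0\leq\Delta Y<Y$. Since $X=y\geq 2$ and $Y=x\geq 2$ by hypothesis, $\Delta X=\Delta y\geq 0$, and $0\leq\Delta Y=\Delta x<x=Y$ is exactly the assumption $0\leq\Delta x<x$ of Proposition~\ref{appendix-pro-030-2}, all constraints transfer. Hence Proposition~\ref{appendix-pro-030} applies and shows that $h$, viewed as $g(X,Y)$, increases in $X$ and decreases in $Y$.

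Finally I would translate the monotonicity back to the original variables. Because $X=y$, $Y=x$, and the increments are held fixed throughout, ``increases in $X$'' means $h$ increases in $y$, and ``decreases in $Y$'' means $h$ decreases in $x$, which is exactly the assertion of Proposition~\ref{appendix-pro-030-2}. The only point demanding care, and thus the main (albeit minor) obstacle, is the bookkeeping of the substitution: one must verify that varying $x$ with $y,\Delta x,\Delta y$ held fixed corresponds to varying the \emph{second} coordinate $Y$ of $g$ with $X,\Delta X,\Delta Y$ fixed, and symmetrically for $y$. Once the constraint $0\leq\Delta x<x$ is matched to $0\leq\Delta Y<Y$, no analytic work remains, since all the genuine content lives in the already-assumed Proposition~\ref{appendix-pro-030}. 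If a self-contained argument were preferred, the same conclusion would follow by differentiating $h$ and checking the signs of $\partial h/\partial x$ and $\partial h/\partial y$, but this merely reproduces the computation underlying Proposition~\ref{appendix-pro-030}, so the symmetry route is the economical one.
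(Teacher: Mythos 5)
Your proposal is correct and follows exactly the route the paper itself takes: the paper offers no separate proof, stating only that Proposition~\ref{appendix-pro-030-2} is obtained from Proposition~\ref{appendix-pro-030} ``due to the symmetry of the function $f(.,.)$,'' which is precisely your swap $X=y$, $Y=x$, $\Delta X=\Delta y$, $\Delta Y=\Delta x$ with the hypothesis $0\leq\Delta x<x$ mapping onto $0\leq\Delta Y<Y$. Your write-up merely makes this bookkeeping explicit, so no discrepancy with the paper's argument exists.
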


\begin{pro}\label{appendix-pro-040}
Let  $g(x)=(x-2)(-f(x,4)+f(x-1,4))$ with  real number $x \geq 4$.
Then, $g(x)$ decreases in $x$.
\end{pro}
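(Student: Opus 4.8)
The plan is to prove $g'(x)<0$ for all real $x\ge 4$. Writing $F(x):=f(x,4)=\tfrac12\sqrt{(x+2)/x}$, we have $g(x)=(x-2)\bigl(F(x-1)-F(x)\bigr)$. The function $F$ is positive and strictly decreasing, so $F(x-1)-F(x)>0$ and hence $g(x)>0$; the content of the claim is that this positive quantity strictly decreases. A direct differentiation gives $g'(x)=\bigl(F(x-1)-F(x)\bigr)+(x-2)\bigl(F'(x-1)-F'(x)\bigr)$, where the first summand is positive; the whole point is to show that the second (negative) summand dominates.

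To make the domination transparent I would pass to an integral representation. Set $p(t):=-F'(t)=\dfrac{1}{2\,t^{3/2}\sqrt{t+2}}$, which is positive and strictly decreasing on $t>0$ (equivalently, $F$ is convex). Then $F(x-1)-F(x)=\int_{x-1}^{x}p(t)\,dt$, and differentiating under the integral sign, together with $p(x)-p(x-1)=-\int_{x-1}^{x}(-p'(t))\,dt$, yields
\[
g'(x)=\int_{x-1}^{x}p(t)\,dt+(x-2)\bigl(p(x)-p(x-1)\bigr)=\int_{x-1}^{x}\Bigl(p(t)-(x-2)\bigl(-p'(t)\bigr)\Bigr)\,dt.
\]
Thus it suffices to show that the integrand is negative, i.e. $p(t)<(x-2)\bigl(-p'(t)\bigr)$ for every $t\in[x-1,x]$.

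Since $-p'(t)>0$, this reduces to controlling the ratio $R(t):=\dfrac{p(t)}{-p'(t)}$. A short computation from $(\ln p)'(t)=-\tfrac{3}{2t}-\tfrac{1}{2(t+2)}$ gives the clean elementary form $R(t)=\dfrac{t(t+2)}{2t+3}$, which is increasing in $t$; hence its maximum over $[x-1,x]$ is $R(x)$, and the required pointwise inequality holds as soon as $R(x)\le x-2$, i.e. $x(x+2)\le(x-2)(2x+3)$, i.e. $x^2-3x-6\ge 0$. This polynomial inequality holds for $x\ge\tfrac{3+\sqrt{33}}{2}\approx 4.37$, in particular for all $x\ge 5$, which settles the claim there.

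The main obstacle is the residual window $4\le x<\tfrac{3+\sqrt{33}}{2}$, where the pointwise bound on the integrand is too lossy because $x-2$ is barely large enough. On this bounded interval I would simply verify $g'(x)<0$ directly from the closed form for $g'(x)$ obtained by differentiating $g(x)=\tfrac{x-2}{2}(u-v)$ with $u=\sqrt{(x+1)/(x-1)}$ and $v=\sqrt{(x+2)/x}$, checking that it has no zero and is negative on $[4,5]$ — exactly the kind of finite single-variable verification used throughout the paper. Combining the two ranges gives $g'(x)<0$ for all $x\ge 4$, so $g$ decreases.
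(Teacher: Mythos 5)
Your argument is correct in substance and takes a genuinely different route from the paper. The paper's proof differentiates $g$ directly, collects $g'(x)$ into a single quotient, isolates the two radicals on opposite sides, and squares, reducing the sign question to the polynomial inequality $4-4x-3x^2+2x^3-2x^4+8x^5-2x^6<0$, which holds uniformly for all $x\ge 4$ --- so no residual case ever arises. You instead exploit the convexity of $F(t)=f(t,4)$ via the integral representation $g'(x)=\int_{x-1}^{x}\bigl(p(t)-(x-2)(-p'(t))\bigr)\,dt$ with $p=-F'$, and control the monotone ratio $R(t)=p(t)/(-p'(t))=t(t+2)/(2t+3)$; all your computations here check out ($p(t)=\tfrac{1}{2}t^{-3/2}(t+2)^{-1/2}$, $R$ increasing, and $R(x)\le x-2$ equivalent to $x^2-3x-6\ge 0$). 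This is more conceptual than the paper's brute squaring and more portable: essentially the same ratio computation with the shift $x\mapsto x+c-1$ would yield Proposition~\ref{appendix-pro-050}, whose proof the paper omits as ``very similar.'' What it costs you is the endpoint: the pointwise bound on the integrand is lossy and only covers $x\ge\tfrac{3+\sqrt{33}}{2}\approx 4.37$, and you leave the window $\bigl[4,\tfrac{3+\sqrt{33}}{2}\bigr)$ to an unexecuted ``direct verification'' on $[4,5]$. That verification would succeed, but note it is genuinely delicate rather than routine: numerically $g'(4)\approx-0.0019$, so the margin near $x=4$ is thin and the step cannot simply be waved through. To make the proof self-contained you should either carry out that finite check explicitly, or observe that on the residual interval the paper's own device --- put one radical on each side of the closed form of $g'(x)$ and square --- settles the sign rigorously there (indeed it does so for all $x\ge 4$ at once, which is exactly why the paper's less illuminating computation needs no case split).
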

\begin{proof} 
The first derivative of $g(x)$ after simplification and rearranging is
\begin{eqnarray*} \label{eq-pro-P3-50}
\frac{\partial g(x)}{\partial x} 
&=&\frac{\left(-2+3 x-x^2+x^3-x^4\right) \ds \sqrt{\frac{1+x}{-1+x}}+\left(x^2-x^3+x^4 \right)  \ds \sqrt{\frac{2+x}{x}}}{2 (-1+x)^2 x^2  \ds \sqrt{\frac{1+x}{-1+x}}  \ds  \sqrt{\frac{2+x}{x}}}. \nonumber
\end{eqnarray*}
To show that $\partial g(x)/ \partial x$ is negative, suffices to show
\begin{eqnarray} \label{eq-pro-P3-60}
&&\left(-2+3 x-x^2+x^3-x^4\right) \sqrt{\frac{1+x}{-1+x}}< -\left(x^2-x^3+x^4 \right)  \ds \sqrt{\frac{2+x}{x}}. \nonumber 
\end{eqnarray}
Quadrating and simplifying the last expression gives
\begin{eqnarray} \label{eq-pro-P3-70}
4 - 4 x - 3 x^2 + 2 x^3 - 2 x^4 + 8 x^5 - 2 x^6 <0 \nonumber
\end{eqnarray}
which is satisfied for $x \geq 4$.
\end{proof} 

\noindent
The following result, Proposition~\ref{appendix-pro-050}, is similar to that in Proposition~\ref{appendix-pro-040}.
Since, the proofs are very similar we omit the proof of Proposition~\ref{appendix-pro-050}.
\begin{pro}\label{appendix-pro-050}
Let  $g(x)=(x-1)(-f(x,4)+f(x+c-1,4))$ with  real number $x \geq 4$ und $c$ is a positive real constant.
Then, $g(x)$ decreases in $x$.
\end{pro}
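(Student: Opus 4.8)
The plan is to follow the template of Proposition~\ref{appendix-pro-040}: compute $g'(x)$, collapse the two groups of radical terms, and reduce the sign condition $g'(x)<0$ to a polynomial inequality that can be checked on the relevant range. First I would record that, by (\ref{eqn:001}), $f(t,4)=\tfrac12\sqrt{(t+2)/t}$. Writing $\phi(t):=f(t,4)$ and $a:=x+c-1$ (so that $da/dx=1$), we have $g(x)=(x-1)\bigl(\phi(a)-\phi(x)\bigr)$ and therefore
\[
g'(x)=\bigl(\phi(a)-\phi(x)\bigr)+(x-1)\bigl(\phi'(a)-\phi'(x)\bigr),
\qquad \phi'(t)=-\frac{1}{2t^{2}}\Bigl(\frac{t+2}{t}\Bigr)^{-1/2}.
\]

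Next I would combine, for each of the two arguments $a$ and $x$, the constant term of $\phi$ with the corresponding derivative term over a common radical. A short computation shows that the two terms carrying $\sqrt{(a+2)/a}$ collapse to $\bigl((a+1)^2-x\bigr)\big/\bigl(a^{2}\sqrt{(a+2)/a}\bigr)$, while the two terms carrying $\sqrt{(x+2)/x}$ collapse to $-(x^{2}+x+1)\big/\bigl(x^{2}\sqrt{(x+2)/x}\bigr)$ (using $x^{2}+x+1=(x+1)^{2}-x$), so that
\[
2g'(x)=\frac{(a+1)^{2}-x}{a^{2}\sqrt{(a+2)/a}}-\frac{x^{2}+x+1}{x^{2}\sqrt{(x+2)/x}}.
\]
Since $a=x+c-1>x-1\ge 3$, both numerators are positive, so $g'(x)<0$ is a comparison of two positive quantities; squaring and clearing the (positive) denominators turns it into the polynomial inequality
\[
\bigl((a+1)^{2}-x\bigr)^{2}\,x^{3}(x+2)\;<\;(x^{2}+x+1)^{2}\,a^{3}(a+2),\qquad a=x+c-1 .
\]
As a consistency check, at $a=x$ (that is, $c=1$) the two sides coincide, matching $g\equiv 0$ there; this pins $c=1$ as the boundary case and shows that the real content lies in the behaviour for $c\neq1$. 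This step is the exact analogue of the reduction in Proposition~\ref{appendix-pro-040}, where one arrives at the single-variable inequality $4-4x-3x^{2}+2x^{3}-2x^{4}+8x^{5}-2x^{6}<0$.

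The feature that distinguishes this argument from Proposition~\ref{appendix-pro-040}, and which I expect to be the main obstacle, is the free parameter $c$: the displayed inequality is bivariate in $x$ and $c$ (through $a$), so it cannot be settled by verifying a single univariate polynomial. My intended route is to treat the difference of the two sides as a polynomial $P(x,c)$, use the monotonicity Propositions~\ref{appendix-pro-030} and~\ref{appendix-pro-030-2} to control the dependence on $c$ and reduce the verification to an extremal value of $c$, and only then carry out a univariate polynomial check of the kind already done in Proposition~\ref{appendix-pro-040}. The delicate point is precisely this $c$-bookkeeping, because the two radical groups respond in opposite directions to a change of $c$, so their net effect on the sign of $g'(x)$ must be tracked carefully over the whole admissible range.
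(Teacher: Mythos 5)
Your algebra is correct as far as it goes: with $\phi(t)=f(t,4)=\tfrac12\sqrt{(t+2)/t}$ and $a=x+c-1$ one indeed gets $2g'(x)=h(a)-h(x)$, where $h(t)=\bigl((t+1)^2-x\bigr)\big/\bigl(t^{2}\sqrt{(t+2)/t}\bigr)$, and squaring yields exactly your bivariate polynomial inequality. This is the same template as Proposition~\ref{appendix-pro-040} --- which is also what the paper intends, since it omits the proof of Proposition~\ref{appendix-pro-050} entirely, appealing to that similarity. But your proposal stops at the decisive step: the inequality is only announced, together with a plan to control the $c$-dependence via Propositions~\ref{appendix-pro-030} and~\ref{appendix-pro-030-2}, and that plan cannot be carried out, because the inequality is \emph{false} on part of the stated range. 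Take $x=4$ and $c=2$ (so $a=5$): your left side is $(36-4)^{2}\cdot 4^{3}\cdot 6=393216$, while the right side is $21^{2}\cdot 5^{3}\cdot 7=385875$, so $g'(4)>0$. A direct check confirms this: $g(4)=3\bigl(f(5,4)-f(4,4)\bigr)\approx-0.0623$ and $g(5)=4\bigl(f(6,4)-f(5,4)\bigr)\approx-0.0570$, so $g$ increases between $x=4$ and $x=5$.

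The failure was in fact signalled by your own consistency check: $c=1$ is an \emph{interior} point of the admissible set $c>0$, and $h$ is still increasing at $t=x$ (its maximum sits near $t\approx 2x-1$), so $h(a)-h(x)$ changes sign as $c$ crosses $1$ instead of staying nonpositive. More structurally, for any fixed $c>1$ one has $g(x)<0$ for all $x\geq 4$ while $g(x)\approx-(c-1)/(2x)\to 0^{-}$ as $x\to\infty$, so $g$ cannot be decreasing on $[4,\infty)$; the proposition as literally stated holds only for $0<c\leq 1$, i.e.\ when $a\leq x$ lies on the increasing side of $h$ --- precisely the regime of Proposition~\ref{appendix-pro-040}, whose shift is $-1$ (``$c=0$''), which is why the ``very similar proof'' goes through there but not here. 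Note also that Propositions~\ref{appendix-pro-030} and~\ref{appendix-pro-030-2} concern first differences of $f$, not the derivative-collapsed quantity $h$, so they give no purchase on the $c$-bookkeeping even in the range where the claim is true. In short, there are two genuine gaps: the proof is unfinished, and no reduction to an extremal value of $c$ can finish it, because the target inequality fails for $c>1$; any repair must either restrict $c$ to $(0,1]$ or replace the proposition by the weaker monotonicity (in $c$, via $f(x+c-1,4)$ decreasing in $c$) that the main text's application actually needs.
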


%
%

%

\end{document}